\newcommand{\be}{\begin{enumerate}}
\newcommand{\ee}{\end{enumerate}}
\def\ignore#1{}
\def\intersection{\cap}
\def\dom{\ensuremath{\mathrm{dom}}}
\def\val{\ensuremath{\mathrm{val}}}
\newcommand{\nullname}{\texttt{null}}
\newcommand{\sch}{\mathop{\mathrm{sch}}}
\newcommand{\BigO}[1]{\ensuremath{\operatorname{O}\bigl(#1\bigr)}}
\newcommand{\danielinline}[1]{\todo[inline,backgroundcolor=blue!20!white]{#1}}
\newcommand{\antonioinline}[1]{\todo[inline,backgroundcolor=green!20!yellow]{#1}}
\newtheorem{defin}{Definition}[section]
\newcommand{\union}{\cup}
\def\fd#1#2{\mathit{#1} \to \mathit{#2}}
\begin{document}
\setcounter{page}{1}
\issue{to appear}

\title{On Desirable Semantics of Functional Dependencies over
Databases with Incomplete Information}
\address{A. Badia, Duthie Hall 227, University of Louisville,
Louisville KY 40292, USA}


\author{Antonio Badia\\
CECS Department\\
University of Louisville\\
Louisville KY 40292, USA
\and Daniel Lemire\\
LICEF\\
Universit\'e du Qu\'ebec\\
5800 Saint-Denis, Montreal, QC, H2S 3L5 Canada}
\maketitle

\runninghead{A. Badia and D. Lemire}{Functional Dependencies over
Databases with Incomplete Information}

\begin{abstract}
Codd's relational model  describes just one possible world. To better
 cope with incomplete  information,  extended database models 
 allow several possible worlds. Vague tables are one such convenient
 extended model where attributes accept sets of possible values
 (e.g., the manager is either Jill or Bob). 
 However, conceptual database design in such cases
  remains an open problem. 
In particular,  there is no canonical definition of functional
 dependencies (FDs) over possible worlds (e.g., each employee has just
 one manager). We identify  several desirable properties that the
 semantics of such FDs  should  meet including Armstrong's axioms, the
 independence from irrelevant attributes, seamless satisfaction and
 implied by strong satisfaction.  We show that we can define FDs such
 that they have all our desirable properties over vague tables. 
However, we also show that no notion of FD can satisfy all our
 desirable properties over a more general model (disjunctive
 tables). Our work formalizes a trade-off between having a general
 model and having well-behaved FDs.
\end{abstract}

\begin{keywords}
Relational Database, Functional Dependencies, Database
Design, Incomplete Information, Conceptual Design 
\end{keywords}

\section{Introduction}

We often still teach database design like we did decades ago. Yet many
of the underlying assumptions made by database textbooks are fading.  
For example, it is no longer always possible to assume that
information systems are strongly consistent when even a modest organization
might have hundreds of different databases   hosted all over the world.
Our information systems have to reflect different viewpoints and accommodate
uncertainties and disagreements.

Nevertheless, we must maintain basic relationships between attributes
if we hope to use the data. One such fundamental relationship is given by 
functional dependencies (hereafter, FDs). For example, we may know
that all employees have only one superior (written
``$\textrm{employee} \to \textrm{superior}$''). Mathematically, we 
require that the relation between employee and superior be
a {function}: each possible employee must have one and only one
superior. 

Identifying such relationships is part of conceptual design: determining what belongs in the database and
how it should be structured. Conceptual design under the assumption
that we have complete information might be considered a solved
problem. 


Unfortunately, our knowledge is often
incomplete~\cite{Badia:2011:CAR:2070736.2070750}---and that is even
more apparent today with our highly distributed and heterogeneous
systems. When merging two 
databases, we might find that one assigns Jill as John's superior,
whereas the other assigns Bob as John's superior. Such a discrepancy
may come, for instance, when an update was 
propagated to one database but not yet to the other. We can reasonably
assume that John's superior is either Jill or Bob but it may  be
impossible to  clear up the ambiguity quickly. 
These problems are frequent and costly: 
Brodie and Liu~\cite{brodie2010} estimated that 40\% of the cost
associated with information systems is due to data integration
problems. 

What should we do in such cases? One possibility is to declare John's
superior to be unknown. We might represent an unknown value as
a \nullname{} marker in a traditional 
relational database~\cite{Codd:1986:MIR:16301.16303}. Thankfully, FDs
can be enforced despite \nullname{} markers~\cite{Badia15052014}.

However, using \nullname{} markers is not always satisfactory. In our
example with John and his superior, we expect the superior to be
either Jill or Bob. We want the database to retain this information
while enforcing FDs.  That is, we want to allow sets of possible
values (John's superior is either Jill or Bob): we call the result
a \emph{vague} table.


 Such a proposal is not new: several models of incomplete information
 beside vague tables have been proposed, each with different
 expressiveness and  properties (see Das Sarma et
 al.~\cite{sarmauncertain:2009} for an  overview\footnote{In Das Sarma et
al.~\cite{sarmauncertain:2009} and other works,  vague tables are
called {attribute-or} tables.})  Yet there is no
 agreed-upon concept of FDs for databases with incomplete
 information. This leaves practitioners and textbooks with little
 guidance. But without such guidance,
 practical implementations are often ad hoc.

We approach the problem of defining a suitable notion
of FD over incomplete information from a novel perspective. As
there are already a variety of proposed definitions over a diversity
of  models of incomplete information, we first attempt to come up with
a list of  desirable properties that the definition of a practical
concept of FD might have, regardless of the type of model involved. We
justify each property on intuitive grounds. 
 We then review several past proposals and show that they do
not satisfy some of these desirable properties. This leaves open the
question of whether a notion of FD can exist that has all the desired
properties over some model of incomplete information. We show that
there cannot be such a notion over one such model (disjunctive
tables). We then propose a definition of FD over vague
tables, P-functional dependency or PFD, that has a simple
interpretation and  provides all our desirable properties
over vague tables. In this sense, PFDs serve as an example to show
that our desirable properties can be satisfied over vague tables.

We emphasize that the issue under discussion here is the intended
interpretation of $X \to Y$ under different models. Thus, in the
following, references to a 'concept' or 'notion' of FD refer to the
{\em semantics of FDs}; we explore the characteristics that a
'desirable' semantics should have in order to develop a viewpoint
that allow us to analyze and compare existing definitions as well as
our own. 

The fact that a concept of FD over disjunctive tables cannot have all
the desired properties is in agreement with the observation
in Das Sarma et al.~\cite{sarmauncertain:2009} that more expressive
models tend to be more complex and more difficult to reason about. Our
results show that it may also be difficult to come up with a notion of
FD that achieves general intuitive agreement over disjunctive tables or
more sophisticated models. Hence, this motivates the study and
application of more restrictive models like vague tables, in spite of
their expressive shortcomings.

\section{Related Work}
\label{section:relatedwork}

There is a  large body of work on  FDs for
extended models. We can 
distinguish three main approaches: 
\begin{enumerate}
\item work that deals with incomplete
data by using \nullname{}s or sets of possible values (including
disjunctive databases)~\cite{levene1998axiomatisation,link2014relational};   
\item work that adds  information other than values
(such as possibilistic/probabilistic databases)~\cite{bosc2003impact,bosc2009functional,bosc2013,wu2010maintaining,lu2009maintaining};  
\item and work that does  not deal with simple values (fuzzy databases, where
values can be fuzzy functions)~\cite{doi:10.1080/03081079.2013.798909,qureshi2012using,cordero2011efficient,liu2012handling,bosc1998fuzzy}. 
\end{enumerate}

In the first approach, it is common to consider the
database as denoting a set of possible worlds. 
This line of 
research has applied the tools of modal logic to the study of data
dependencies~\cite{hartmann2010data}. 


The second approach, based on possibilistic and probabilistic
databases, can be considered an extension of the the possible
worlds
framework~\cite{bosc2003impact,bosc2009functional,bosc2013}. However,
Link and Prade~\cite{link2014relational} assign
possibilities to tuples, not to values. Based on these possibilistic tuples,
    possible worlds are generated as a nested chain: the smallest
    world is the one with only fully possible tuples; the largest
    world contains all tuples. Certainty degrees are attached to standard
    FDs, based on the possibility degree of the smallest world
    in which they are violated. 

Work on fuzzy databases is of a different nature, in that database
values are not considered atomic entities, but they are fuzzy
(membership) functions over some base set~\cite{beaubouef2012rough}. For instance, given a 
domain \emph{Age}, the fuzzy values {young} and 
{infant} can be seen as functions giving a degree of membership
 to each value in {Age}. These functions can even be
modified, e.g., the function {very young} can be considered as
modification (a translation of sorts) of {young}. Thus, when
considering an FD $X \to Y$, and two tuples $t_1,t_2$ in some table,
we usually need to decide whether $t_1[A] = t_2[A]$ (for
some $A \in X \union Y$). In the case of fuzzy databases (when
$t_1[A]$ and $t_2[A]$ can be fuzzy functions), we need to rely on
fuzzy logic to determine their degree of similarity---in other words,
we have no equality relation as a crisp, binary
relation~\cite{Raju:1988:FFD:42338.42344}. The exact concept of FD
depends on the underlying fuzzy logic being
adopted~\cite{doi:10.1080/03081079.2013.798909}. 

\section{Background}
\label{section:background}


There has been much work on FDs over incomplete information. The most
common model for incomplete information over relational databases allows 
tables with \nullname{} markers. However, when something is known about the
missing value, it might be preferable to keep
track of the various possible worlds, as advocated
by Imielinski et al.~\cite{imielinski1989incomplete, ImielinskiLipski}.
For example, we can use disjunctive
databases, in which a tuple is a disjunction of (regular) tuples,
showing the different possible values that a tuple can take. Consider the problem where Jill or Bob could be John's
superior. We can indicate these possibilities with the disjunctive
tuple $(\textrm{John}, \textrm{Jill})\;||\;
(\textrm{John}, \textrm{Bob})$ which indicates that both tuples,
$(\textrm{John}, \textrm{Jill})$ and $(\textrm{John}, \textrm{Bob})$,
are possible.
A convenient alternative is the {vague} table where 
{individual} attributes can be set-valued, indicating the possible
values of an attribute in a given tuple.  For example, we
might write $(\textrm{John},\{ \textrm{Jill},\textrm{Bob}\})$ to
indicate that John's superior might be Jill or Bob. In the rest of
this section, we  formalize these concepts.  

\subsection{Vague and Disjunctive Tables}
\label{subsection:incompletetables}

\begin{table}
\caption{Standard, vague and disjunctive tuples. Standard tuples are
instances of vague and disjunctive tuples.  \label{table:ourtuples}} \centering
\begin{tabular}{cp{10cm}}
term & definition \\\hline
standard tuple & one known value per attribute:
$(\textrm{John}, \textrm{Jill})$\\ 
vague tuple & some attributes may have a choice of values:
$(\textrm{John}, \{\textrm{Jill}, \textrm{Bob}\})$\\ 
disjunctive tuple & disjunction of standard tuples:
$(\textrm{John}, \textrm{Jill}) \; ||\;
(\textrm{John}, \textrm{Bob})$\\ 
\end{tabular}
\end{table}

Fix a schema $A_1,\ldots,A_n$ as a list of attributes with attached
domains $\dom(A_i)$ ($i=1, 2, \ldots, n$). We assume that all domains are equivalence
relations under the equality ($=$). 
 The schema corresponding to table $R$ is written
$\sch(R)$. We consider three types of tuples (see
Table~\ref{table:ourtuples}): 
\begin{itemize}
\item A standard tuple is a tuple as in the conventional relational
algebra. Formally, it is a total function from the 
schema to the Cartesian product of the domains, with $t[A_i] \in 
\dom(A_i)$. For example, $(\mathrm{John, Jill})$ might represent such
a tuple. 
\item A vague tuple is like a standard tuple where entries
are non-empty
 sets of values. Formally, a vague
tuple is a total function from the schema to 
 the Cartesian product of the finite powerset of the domains (e.g.,
 $(\mathrm{John},\{\mathrm{Jill, Bob}\})$): i.e.\ we have that 
 $t[A_i] \subset \dom(A_i)$ and  $t[A_i]$ is finite. 
For simplicity, we write singletons as elements,
 i.e.\ $\mathrm{John}$  instead of $\{\mathrm{John}\}$. We consider a
 standard tuple to be an instance of vague tuple. 
 \item A  disjunctive tuple is a
 finite disjunction of standard tuples: e.g.,  \begin{align*}(\mathrm{John,
 Jill})\;||\;(\mathrm{John, Bob}).\end{align*}
 We consider a standard tuple to be an instance of a disjunctive tuple.
 \end{itemize}
  A standard table is a set  of standard tuples, and likewise for
 vague and disjunctive tables. 
  
As usual, we can restrict tuples to a subset of attributes from the
schema. Given a table $R$, a tuple $t \in R$, and a set of
attributes $X \subseteq \sch(R)$, the tuple $t[X]$  is the \emph{projection}
of $t$ on the attributes of $X$.
 For example, given  $t=(\mathrm{John, Jill})$, we have that
 $t[\mathrm{Employee}]=(\mathrm{John})$.  
 If $t$ is a standard (vague, disjunctive) tuple, so is $t[X]$.  
 Given a table $R$, the set of all its tuples projected on $X$ is
 denoted  $\pi_X(R)= \{t[X] \mid t \in R\}$. For example, given the
 two-tuple vague table $R$ made from two attributes $A$ and $B$,
 $\{(a,\{b_1,b_2\}),(a,b_3)\}$, we have that $\pi_A(R)$ is given by the
 single tuple $(a)$ whereas $\pi_B(R)$ is given by the two tuples
 $(\{b_1,b_2\})$ and $(b_3)$. 

Valuations are defined for vague and disjunctive
 tuples. A \emph{valuation} for a   
 vague tuple is a total function that  {chooses} a single value from 
$\dom(A_i)$ for each $t[A_i]$. E.g.,  $(\mathrm{John},\mathrm{Jill})$
 is a valuation of  $(\mathrm{John},\{\mathrm{Jill,Bob}\})$.  A
 valuation for a disjunctive tuple is a function that chooses one
 disjunct per tuple. For example,   $(\mathrm{John}, \mathrm{Jill})$ is a
 valuation of
 $(\mathrm{John}, \mathrm{Jill})\;||\;(\mathrm{John}, \mathrm{Bob})$. 
 In all cases, a valuation
yields a standard tuple. 

Given two tuples $t,t'$, let $t \union t'$ be a new tuple that represents
the union of all valuations and let $t \intersection t'$ be a new
tuple representing the intersection of the valuations from $t$ and
$t'$. The union  of vague (resp.\ disjunctive) tuples is a vague
(resp.\ disjunctive) tuple. The intersection is either the empty set or
a vague (resp.\ disjunctive) tuple. E.g.,
$(\textrm{John}, \{ \textrm{Bill}, \textrm{Bob} \}) \union
(\{\textrm{John},\textrm{Julie}\}, \textrm{Bill})
=(\{\textrm{John},\textrm{Julie}\}, \{ \textrm{Bill}, \textrm{Bob} \})
$ and 
$(\textrm{John}, \{ \textrm{Bill}, \textrm{Bob} \}) \intersection
(\{\textrm{John},\textrm{Julie}\}, \textrm{Bill})
=(\textrm{John}, \textrm{Bill}) $.  

Valuations are extended to vague and disjunctive
tables in the usual way, with the specification that 
 duplicates are effectively removed in the process so that a table is
 always a set of tuples---as in the traditional relational model. Hence, each 
 valuation  on a vague or disjunctive table yields a standard table, called
 a \emph{possible world}. The set of all valuations for table $T$ is
 denoted  $\val(T)$, hence $\val(T)$ is the set of all possible
 worlds. 
 
Two standard tuples are equal if all attribute values are equal: $t=t'$ if and only if  $t[A]=t'[A]$ for
all attributes $A$. For vague and disjunctive tuples, 
two tuples are equal if they have the same possible valuations. 
As indicated by the next lemma, we can check the equality between two
vague tuples on  an attribute-by-attribute basis, but not for 
disjunctive tuples. Indeed, let $t_1=(a,b)||(a',b')$ and $t_1=(a',b)||(a,b')$, then
$t_1[A]=t_2[A]$ for both attributes $A$ whereas clearly $t_1$ and
$t_2$ are not equal.

\begin{lemma}\label{lemma:attbyatt}
Two vague or two standard tuples $t_1, t_2$ are equal on
$X$ ($t_1[X]=t_2[X]$) if and only if $t_1[A]=t_2[A]$ for all
attributes $A\in X$. This result is false for disjunctive tables. 
\end{lemma}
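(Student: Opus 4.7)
My plan is to handle the three cases of the lemma separately, with the vague case being the substantive one.

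First, I would dispatch the standard-tuple case: since a standard tuple is just a function from the schema to the domains, $t_1[X] = t_2[X]$ is by definition the conjunction of the pointwise equalities $t_1[A] = t_2[A]$ for $A \in X$. The forward direction of the vague case then also yields one direction for free: projecting $t_1[X] = t_2[X]$ further to a single attribute $A \in X$ gives $t_1[A] = t_2[A]$.

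The heart of the proof is the backward direction for vague tuples. Using the definition that two vague tuples are equal when they admit the same valuations, I would argue that the valuations of $t[X]$ factor as the Cartesian product $\prod_{A \in X} t[A]$: a valuation chooses, independently, one element from each set $t[A]$. Consequently, if $t_1[A] = t_2[A]$ as subsets of $\dom(A)$ for every $A \in X$, then $\prod_{A \in X} t_1[A] = \prod_{A \in X} t_2[A]$, so $t_1[X]$ and $t_2[X]$ share exactly the same set of valuations and are thus equal as vague tuples. I would flag this factorization as the conceptual pivot, since it is precisely what fails for disjunctive tuples (where the choices across attributes are linked by the disjunct that was chosen).

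Finally, for the disjunctive case, I would exhibit the counterexample already suggested in the surrounding text. Take schema $\{A, B\}$, and let
\[
t_1 = (a,b) \;||\; (a',b'), \qquad t_2 = (a',b) \;||\; (a,b').
\]
Then $\pi_A(\val(t_1)) = \{a, a'\} = \pi_A(\val(t_2))$ and $\pi_B(\val(t_1)) = \{b, b'\} = \pi_B(\val(t_2))$, so $t_1[A] = t_2[A]$ and $t_1[B] = t_2[B]$. Yet $\val(t_1) = \{(a,b),(a',b')\}$ and $\val(t_2) = \{(a',b),(a,b')\}$ are different sets of standard tuples, so $t_1 \neq t_2$. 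This shows the attribute-by-attribute characterization fails on disjunctive tuples.

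I expect the main obstacle, if any, to be purely notational: carefully stating that projection of a vague tuple $t[X]$ is itself a vague tuple whose valuations decompose as the Cartesian product over $A \in X$, so that equality of vague tuples reduces cleanly to equality of these products. No heavy machinery is required beyond the definitions already given in Section~\ref{subsection:incompletetables}.
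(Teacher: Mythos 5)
Your proposal is correct and follows the same route the paper takes: the paper gives no separate formal proof of this lemma, relying on exactly the observation that the valuations of a vague tuple restricted to $X$ form the Cartesian product $\prod_{A\in X} t[A]$ (with each factor non-empty, which is what lets you recover each $t[A]$ from the product in the forward direction), and it states your disjunctive counterexample $(a,b)\;||\;(a',b')$ versus $(a',b)\;||\;(a,b')$ verbatim in the text immediately preceding the lemma. Your identification of the product factorization as the precise point that fails for disjunctive tuples is exactly the paper's intended reading.
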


This distinction between vague and 
disjunctive tuples is relevant for
FDs. One of Armstrong's axioms (augmentation) says that
if $t[X]=t'[X]$ implies $t[Y]=t'[Y]$, 
then $t[XZ]=t'[XZ]$ implies $t[YZ]=t'[YZ]$.
We can generalize this result over all models where equality can be
determined attribute-by-attribute. Indeed, suppose that $t[X]=t'[X]$
implies $t[Y]=t'[Y]$.  We have that  $t[XZ]=t'[XZ]$  if and only if 
$t[X]=t'[X] \land t[Z]=t'[Z]$ which implies that
$t[Y]=t'[Y] \land t[Z]=t'[Z]$ which is equivalent to
$t[YZ]=t'[YZ]$. Unfortunately, the last step may fail in a model where
we cannot check equality attribute-by-attribute like disjunctive
tables. 

Two tables are equivalent if they have the same
 set of possible worlds. In this sense, vague tables are a subtype of
 disjunctive tables. 

\begin{lemma}
\label{lemma:type-relations}
The standard, vague and disjunctive models form a strict
hierarchy. That is,
\be
\item
For any standard table $T$, there is a vague table
  $T'$ that is equivalent to it. The reverse does not hold. 
\item
For any vague table $T$, there is a disjunctive table
  $T'$ that is equivalent to it. The reverse does not hold. 
\ee
\end{lemma}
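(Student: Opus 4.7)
The plan is to verify each inclusion and then exhibit a witness showing strictness in each case.

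For the first statement, the forward direction is immediate: by the definitions in Section~\ref{subsection:incompletetables}, a standard tuple is a special case of a vague tuple (a vague tuple whose attribute values happen to be singletons). So a standard table is already a vague table, and its unique valuation is itself, hence the two have the same set of possible worlds. For the reverse, I would exhibit the single-tuple vague table $T = \{(a,\{b_1,b_2\})\}$ with $b_1 \neq b_2$. Its set of possible worlds is $\{\{(a,b_1)\},\{(a,b_2)\}\}$, which has cardinality two; but any standard table $T'$ has exactly one possible world, namely $T'$ itself. Hence no standard table is equivalent to $T$.

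For the forward direction of the second statement, I would show that any vague tuple $t$ can be ``unfolded'' into an equivalent disjunctive tuple. Since $t[A_i]$ is finite for every attribute $A_i$, the set of valuations $\val(t) = t[A_1]\times\cdots\times t[A_n]$ is finite. Let $d(t)$ denote the disjunctive tuple obtained by listing every element of $\val(t)$ as a disjunct; then $\val(d(t)) = \val(t)$. Given a vague table $V$, define the disjunctive table $D = \{d(t) : t \in V\}$. A valuation of $V$ picks a standard tuple from each $\val(t)$ and takes the set; a valuation of $D$ picks one disjunct from each $d(t)$ and takes the set. These are in bijection and produce the same possible worlds, so $V$ and $D$ are equivalent.

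For the reverse of the second statement, the witness is the single-tuple disjunctive table $T = \{(a,b)\;||\;(a',b')\}$ with $a\neq a'$ and $b\neq b'$, whose set of possible worlds is exactly $\{\{(a,b)\},\{(a',b')\}\}$. Suppose, for contradiction, that some vague table $V$ is equivalent to $T$. Since the possible worlds of $V$ are nonempty, $V$ itself is nonempty. Take any $t \in V$. Because the world $\{(a,b)\}$ is realizable, every tuple in $V$ (and in particular $t$) must admit $(a,b)$ as a valuation, so $a \in t[A]$ and $b \in t[B]$; by the same argument with the world $\{(a',b')\}$, also $a' \in t[A]$ and $b' \in t[B]$. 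But then the valuation of $V$ that maps $t$ to the standard tuple $(a,b')$ (and evaluates any other tuple arbitrarily within the allowed worlds) produces a standard table containing $(a,b')$, which belongs to neither $\{(a,b)\}$ nor $\{(a',b')\}$. This contradicts the equivalence of $V$ and $T$.

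The only subtle step is the reverse of the second statement, where one must be careful about what happens when $V$ has more than one tuple and about duplicate elimination; the key observation that any tuple in $V$ must individually admit both standard tuples as valuations (because each world realizes both of them through some tuple of $V$, and in a singleton world that tuple must be $t$ itself) makes this routine.
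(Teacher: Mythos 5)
Your proof is correct; the paper in fact states this lemma without giving a proof, and your argument supplies exactly the natural one that is left implicit. The forward inclusions follow from the definitions (singleton-valued vague tuples, and unfolding a vague tuple into the disjunction of its finitely many valuations), and your two witnesses are the right ones: a two-valued vague tuple has two possible worlds while a standard table has one, and the correlated disjunctive tuple $(a,b)\,||\,(a',b')$ cannot be captured by any vague table because, as you observe, any vague table realizing both singleton worlds would force every tuple to admit the mixed valuation $(a,b')$ as well. Your care with the singleton-world argument (every tuple of $V$ must individually valuate to $(a,b)$ in the world $\{(a,b)\}$) is precisely the step that makes the strictness claim rigorous, and it echoes the paper's own earlier example showing that disjunctive tuples cannot be compared attribute by attribute.
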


%
%

 Tables
with \nullname{} (or unknown) markers~\cite{Levene:1999:DDI:310701.310712,
Badia15052014,levene1997additivity}  can be considered as a special
case of vague table if the domains are finite. In effect, a tuple over
schema $A_1,\ldots,A_i,\ldots,A_n$ with values
\begin{align*}(a_1,\ldots,\nullname{},\ldots,a_n)\end{align*} can be considered as a shortcut
for $ (a_1,\ldots,\dom(A_i),\ldots,a_n)$. However, it is standard to
assume infinite domains, in which case tables with \nullname{} markers
are incomparable to vague or disjunctive tables. 

Vague tables are one of the simplest
models of incomplete information and, as such, they have well-known 
expressive shortcomings. For instance, a formalism is \emph{complete}
if it can represent any finite set of possible worlds;  vague tables
are not complete. Also, a formalism is \emph{closed} (under the
relational algebra) if the result of applying (relational algebra)
operators to instances of the formalism always results in instances in
the formalism. However, the relational join of two vague tables is not 
necessarily a vague table. Hence, more powerful formalisms, like
c-tables, have been proposed~\cite{ImielinskiLipski}. However, such
models are hard to reason with; in particular, there is no
agreed-upon concept of FD for such sophisticated models. 
In fact, our
research shows that a concept of FD that meets some intuitive
requirements does not exist even for disjunctive tables, let alone more general models like c-tables. 




\subsection{Functional Dependencies}
\label{subsection:fds}

\ignore{
\begin{table}
\caption{Functional dependencies $X\to
Y$. \label{table:funcdep}}
\centering\begin{tabular}{cp{5cm}}
term & definition \\\hline
Standard FD & only defined over a standard table: $t_1[X] = t_2[X]$
then $t_1[Y] = t_2[Y]$\\[1ex]
Weak FD ($\to_W$)& There exists a world where the FD $X\to Y$ is
satisfied (~\cite{ImielinskiLipskiL})\\[1ex] 
Strong FD ($\to_S$) & The FD $X\to Y$ is satisfied in all possible
worlds (~\cite{ImielinskiLipski})\\[1ex] 
Vertical FD ($\to_V$) & Introduced by Das Sarma et
al.~\cite{DBLP:conf/amw/SarmaUW09}\\[1ex] 
PFD ($\to_{\mathrm{PFD}}$)&
given disjoint sets  $X$ and $Y$, whenever $t_1[X]$ and  $t_2[X]$ are
equal in some possible world, 
then $t_1[Y]$ and  $t_2[Y]$ must be identical (See
Definition~\ref{defin:pfd}.)  
\end{tabular}
\end{table}
}

\begin{table}
\caption{Functional dependencies $X\to
Y$. \label{table:funcdep}}
\centering\begin{tabular}{cp{5cm}}
term & definition \\\hline
Standard FD & only defined over a standard table \\[1ex]
Weak FD ($\to_W$)& Defined over incomplete tables
(~\cite{ImielinskiLipski})\\[1ex]  
Strong FD ($\to_S$) & Defined over incomplete tables
(~\cite{ImielinskiLipski})\\[1ex]  
Vertical FD ($\to_V$) & Defined over incomplete tables; introduced by
Das Sarma et al.~\cite{DBLP:conf/amw/SarmaUW09}\\[1ex] 
PFD ($\to_{\mathrm{PFD}}$)&
Defined over incomplete tables; introduced here (see
Definition~\ref{defin:pfd}.)  
\end{tabular}
\end{table}


An FD over a given schema $R$ is an expression $\fd{X}{Y}$ with $X,Y
\subseteq \sch(R)$. A standard table satisfies an FD in the
standard way: $\fd{X}{Y}$ is satisfied by $R$ if and only if  whenever
two tuples $t_1$, $t_2$ of $R$ are such 
that $t_1[X] = t_2[X]$ then $t_1[Y] =
t_2[Y]$. Of course, this definition is only directly applicable to
standard  tables (without \nullname{} markers or set-valued 
attributes\footnote{We can extend the equality relation $=$
to sets, but we reserve the equality symbol  $=$ for
elements from a domain.}). 
Table~\ref{table:funcdep} summarizes the main FD types we consider.

\paragraph{Weak and Strong FDs}

For vague or disjunctive tables, we have weak and strong satisfaction
based on possible worlds: 

\begin{defin} (Strong and weak satisfaction)
\label{def:weakstrong}
\begin{itemize}
\item 
An FD $\fd{X}{Y}$ on a vague or disjunctive table $T$
is \emph{strongly satisfied} (in symbols, $X \to_S Y$) if for all
valuations $\mu \in \val(T)$, $\mu$ satisfies $\fd{X}{Y}$
in the standard sense. 
\item 
An FD $\fd{X}{Y}$ on a vague or disjunctive table $T$ is \emph{weakly
  satisfied} (in symbols, $X \to_W Y$) if for some valuation $\mu \in
  \val(T)$, $\mu$ satisfies $\fd{X}{Y}$ in the standard sense. 
\end{itemize}
\end{defin}
\noindent
 In the conventional setting, a set of FDs is satisfied if and only if
 each FD, taken individually, is satisfied. In the context of weak
 satisfaction, this definition is unsatisfactory,  (see
 Section~\ref{section:motivation}  and Fig.~\ref{fig:examples} for
 examples), hence the next  definition.   
 
\begin{defin} (Seamless satisfaction)
A set of FDs ${\cal F}$ is \emph{seamlessly satisfied} (or holds
seamlessly) on a vague or disjunctive table $T$
if there is a valuation
$\mu \in \val(T)$ such that $\mu$ satisfies all $f \in {\cal F}$.
\end{defin}

Seamless satisfaction is not a new idea. 
 In the context of tables with unknown values, Levene and Loizou
 introduced this concept without a specific name (i.e.\ ``satisfaction
 of a set of
 FDs'')~\cite{Levene:1999:DDI:310701.310712,levene1997additivity}. 
We compare Levene and Loizou's approach to ours in
Section~\ref{section:relatedwork}.

\paragraph{Vertical FDs}

Das Sarma et al.~\cite{DBLP:conf/amw/SarmaUW09}  define 
{vertical FDs} in the context of disjunctive tables.
 To explain this concept, we need
some additional  notation.
Given a disjunctive tuple $t$  and a
 standard subtuple $\bar{a}$, let  
$t[X=\bar{a}]  = \{t' \in t \mid t'[X] = \bar{a}\}$, that is,
the selection of the disjuncts of $t$ that agree with
$\bar{a}$ on the  attributes in $X$ (if there are none, this denotes
the empty set).  Further, let
 $t[X=\bar{a}][Y]$ be the projection of the disjuncts in
 $t[X=\bar{a}]$ on $Y$: $t[X=\bar{a}][Y] = \{t'[Y] \mid t' \in
 t[X=\bar{a}]\}$. 
As an example, let $t$ be a disjunctive tuple over schema {\em
(employee, superior)} with data $(\{John,Peter\}, \{Jill,Bob\})$. 
 Then $t[\mathrm{superior}]$ results in a tuple over schema {\em
 (superior)} with data $(\{Jill,Bob\})$; 
$t[\mathrm{employee} = \mathrm{John}]$ results in a tuple over schema
{\em (employee, superior)} with data $(John, \{Jill,Bob\})$;
and $t[\mathrm{employee} = \mathrm{John}][\mathrm{superior}]$ results
in a tuple over schema {\em (superior)} with data $(\{Jill,Bob\})$.
To simplify the notation, we treat disjunctive tuples as sets of
tuples when convenient. 




With these auxiliary concepts, we can define vertical FDs as
follows. (See Fig.~\ref{fig:exampleu} for an example.) 

\begin{defin}
\label{defin:verticalfds}


The vertical FD $X \to_{\mathrm{V}} Y$ holds over a disjunctive table $R$ if
for any two tuples $t_1,t_2$\footnote{While this is not the original
definition of~\cite{DBLP:conf/amw/SarmaUW09}, it can easily be shown
to be equivalent to it, and is in a much more convenient form for our
purposes here.},
\begin{enumerate}
\item  we have that $t_1[X=\bar{a}][Y]=t_2[X=\bar{a}][Y]$ for all
$\bar{a} \in t_1[X] \intersection t_2[X]$ and  
\item $t_1[X=\bar{a}][Y-X]$ as a set of disjuncts is the Cartesian
product $t_1[X=\bar{a}][Y-X]=\prod_{A\in Y-X} t_1[X=\bar{a}][A]$, and 
\item the multivalued dependency $X \twoheadrightarrow Y-X$ holds.
\end{enumerate}
\end{defin}
%
%

\paragraph{Fuzzy Functional Dependencies}

An exhaustive review of FDs over fuzzy databases is beyond our scope:
instead, see Bosc et al.~\cite{Bosc:1998:FFD:274900.274903}. For
illustrative purposes, we nevertheless consider one notion of FD over
fuzzy sets. 

Raju and Majumdar~\cite{Raju:1988:FFD:42338.42344} define fuzzy
functional dependencies (FFD) over fuzzy sets. We can readily adapt
their definition to our context. 
Their approach starts with the introduction of a
(fuzzy) {resemblance relation}. Resemblance relations
$\mu_{\mathrm{EQ}}$ associate pairs of values to values in $[0,1]$
with the intuition that two input pairs are most alike when their
resemblance is 1. Raju and Majumdar require reflexivity
($\mu_{\mathrm{EQ}}(a,a)=1$) and symmetry
($\mu_{\mathrm{EQ}}(a,b)=\mu_{\mathrm{EQ}}(b,a)$). Though Raju and
Majumdar allow database designers to 
introduce their own, we use  the resemblance relation proposed by Raju 
and Majumdar~\cite[Example~5.1]{Raju:1988:FFD:42338.42344}: 
 given two sets $a$ and $b$, we define 
\begin{eqnarray*}\mu_{\mathrm{EQ}}(a,b)=\max(|a\cap b|/|a|, |a\cap
b|/|b|)
\end{eqnarray*}  
The value of $\mu_{\mathrm{EQ}}(a,b)$ is always in $[0,1]$. We have
that $\mu_{\mathrm{EQ}}(a,b)=0$ if and only if $a$ is disjoint from
$b$. We have that $\mu_{\mathrm{EQ}}(a,b)=1$ if and only if $a$ is
contained in $b$, or $b$ is contained in $a$.  When $a$ and $b$ are
tuple-valued, following Raju and Majumdar, we use the minimum of the
resemblance relations on each attribute: 
$\mu_{\mathrm{EQ}}((a_1,\ldots,a_n),(b_1,\ldots,b_n))=\min(\mu_{\mathrm{EQ}}^1(a_1,b_1),\ldots,\mu_{\mathrm{EQ}}^{n}(a_n,b_n))$   
where $\mu_{\mathrm{EQ}}^1, \ldots, \mu_{\mathrm{EQ}}^{n}$ are the
resemblance relations on attributes $1, \ldots, n$. 

\begin{defin}\label{def:ffd}
The Raju-Majumdar FFD 
$X\to_{\mathrm{RM}}Y$ holds if, for any two 
tuples $t,t'$, 
$\mu_{\mathrm{EQ}}(t[Y],t'[Y])\geq \mu_{\mathrm{EQ}}(t[X],t'[X])$. 
\end{defin}

\paragraph{Armstrong's axioms}
We  expect FDs to behave in certain ways: if $\fd{A}{B}$ and
$\fd{B}{C}$, we expect $\fd{A}{C}$ to hold. This is one of the
properties known as Armstrong's axioms:

\begin{defin}\label{def:Armstrong}
For standard FDs, Armstrong's axioms are the following:
 \begin{enumerate}
\item Reflexivity: If $Y\subseteq X$, then $X\to Y$. 
\item  Augmentation:  we have that  $X\to Y$ implies  $XZ\to
YZ$. 
\item Transitivity:  If $X \to Y$ and $Y \to Z$, then $X\to Z$. 
\end{enumerate}
\end{defin}

These axioms, guaranteed to hold in a standard table, characterize the
behavior of FDs.  
Many intuitive properties can be derived from Armstrong's axioms such
as the  splitting (or decomposition) rule ($X\to Y$ implies $X\to Z$
if $Z \subset Y$) and the combining (or union) rule ($X\to Y$ and
$X\to Z$ implies $X\to Y \cup Z$). 


Unfortunately, Armstrong's axioms are no longer valid when
working with vague or disjunctive tables in the weak interpretation.
In contrast, they hold for strong, 
fuzzy and vertical FDs.

\section{Desirable Properties of Functional Dependencies}
\label{section:desirable}

There are already multiple notions of FD over models of incomplete
information. We consider what constraints  should exist on such notion. 

We examine the case of a designer
faced with a table $R$, and an FD $X \to Y$ in $R$, and
ask whether any characteristics of the FD would help the designer
determine if $X \to Y$ indeed holds in $R$, or what it would take to
enforce it or, if faced with more than one FD, how the FDs as a whole 
behave. We believe that the following are likely to be helpful in such
a scenario: 
\begin{enumerate}
 \item Whether any given FD $X \to Y$ holds should depend only of $X
 \union Y$ (\emph{independence from irrelevant attributes}). Thus, in
 checking $X \to Y$ against some existing data, it should be possible
 to restrict attention and disregard the rest of the schema. 
 
To illustrate, imagine that we consider the FD
$\mathrm{employee} \to \mathrm{superior}$. Suppose that we add a new
attribute (e.g., the employee's email address): it would seem
counter-intuitive that this new, otherwise irrelevant attribute, could
invalidate the FD\@. In a distributed and dynamic setting, where
attributes can be created or removed suddenly, or where some
attributes may only be accessed remotely, it would be inconvenient to
have to check all attributes to enforce any one FD. 

 Formally, we require that the FD $X \to Y$ holds on $R$ if and
 only if it holds on $\pi_{X\cup Y}(R)$. In
 particular, adding a new attribute to a table should not violate
 existing FDs. 

This property should not be taken for granted. For example, strong FDs
 do not satisfy this condition: the table with schema (employee,
 department, manager) made of two  tuples  
\begin{align*}
 (\mathrm{Joe},\mathrm{Engineering},\{\mathrm{Gauss}, \mathrm{Newton}\}),\\ 
 (\mathrm{Jack}, \mathrm{Engineering},\{\mathrm{Gauss}, \mathrm{Newton}\})
\end{align*}
  fails to satisfy $\mathrm{department} \to_S \mathrm{manager}$ but
 its projection on $\{\mathrm{department},\mathrm{manager}\}$
 does.\footnote{The result would be different if the tables were
 multisets instead of sets~\cite{1393007}.  In such instances, the
 projection operator  might not prune duplicates and neither strong
 FDs would be satisfied. However, we wish to recover the standard
 relational algebra  when all tuples happen to be standard tuples and
 we therefore require  tables to be sets of
 tuples. 
 } 
 
  \item If an FD is satisfied in all possible worlds, then the FD should be
    satisfied  (\emph{implied by strong satisfaction}). Intuitively,
    if the FD holds in all possible worlds, then it certainly
    should hold. In particular, an FD should always hold in a table
    with a single tuple. Not all notions of FD that have been proposed
    satisfy this property: vertical
    FDs do not have this property, as pointed out by Das Sarma et
    al.~\cite{DBLP:conf/amw/SarmaUW09}.
\item If an FD is satisfied, it should be satisfied
    in some possible world (\emph{implies weak satisfaction}). A situation in
    which the FD is deemed to hold in the vague or disjunctive table,
    and no valuation is capable of satisfying the FD might be
    quite puzzling (and frustrating) for the designer.
 
 
 \item A set of FDs should satisfy Armstrong's axioms (\emph{Armstrong
 satisfaction}). In particular, we consider that transitivity has a
 strong intuitive and historical appeal. Though it might be
 possible to design database systems with the help of FDs but without
 Armstrong's axioms, we are not aware of any framework
 that supports such work. 
\item If all the FDs in a set are satisfied, then there should be a
 possible world where they are all jointly satisfied (\emph{seamless
 satisfaction}). This can be considered as a strong version of 
 the property \emph{implies weak satisfaction}. 
 
 Consider the case of a designer that is looking at a
 table $R$ and a set of FDs $F$. The designer may be 
 satisfied that each individual FD in $F$ could reasonably hold in $R$,
 perhaps even determine that there are  cases where each FD
 holds. But if there is no single possible world where all FDs hold,
 this implies that if we were to completely remove all uncertainty on
 the table, the table would be inconsistent with $F$. This
 goes against the intuition that, in satisfying each individual FD,
 the table should also satisfy the set $F$---and knowing more about the
 data in the table should not damage this process.
 \end{enumerate}

We posit that the semantics of any notion of FD should be
constrained by these basic requirements. Beyond these semantic
considerations, we look into practical issues of efficiency in  section~\ref{section:efficiency}.

\ignore{%
Table~\ref{table:properties} describes the properties satisfied by the
various types of FDs  over weak tables. 
\danielinline{What does ``weak table'' means? I guess you mean
disjunctive tables? This is an instance where things get complicated
because we have multiple theoretical models at play at the same time
in a non-trivial manner.} 
\antonioinline{The table is missing one dimension, which is the type
of incomplete table on which the property does/does not hold. I've
retired it.}
\begin{table*}
\centering 
\caption{\label{table:properties}Properties of the various types of
 functions dependencies\danielinline{This table is now nonsense. PFDs
 do not have seamless satisfaction over disjunctive tables, only over
 vague tables. Meanwhile, it would be quite a stretch to talk about
 horizontal and vertical FDs over vague tables.}\antonioinline{It
 should be possible to do so since vague tables are a subset of
 disjunctive tables, but let's not force this point.}}  
\begin{tabular}{p{5cm}ccccccc}
 & Strong & Weak & $H_1$ & $H_2$ & Vertical & PFD & RS \\\hline
independence from irrelevant attributes & No & Yes & Yes & Yes  & No &
Yes  & Yes \\ 
 implied by strong satisfaction & Yes & Yes & No & No & No & Yes   & Yes  \\
Armstrong satisfaction & Yes & No & Yes & Yes & Yes & Yes & Yes \\
Seamless satisfaction & Yes & No & Yes & No & Yes & Yes & Yes \\
Computationally practical & Yes & Yes & Yes & Yes & Yes & Yes & Yes \\
\end{tabular}
\end{table*}
}

\subsection{Conservativity, Completeness and
Soundness}

We have left out one other important property because 
it is implied by our properties.
We say that a notion of FD satisfies \emph{conservativity}
if the standard notion of FD is strictly equivalent
to the novel notion of FD when  $R$ is a standard table. More
formally, given a new FD denoted $X \to_{\mathrm{new}} Y$, then when
$R$ is a standard table, the standard FD  $X\to Y$ holds if and only
if $X \to_{\mathrm{new}} Y$ holds. Vertical FDs
satisfy \emph{conservativity}~\cite[Theorem
4.6]{sarmauncertain:2009}. The following lemma shows that any notion
of FD satisfying two of our properties automatically satisfies
conservativity.

\begin{lemma}\label{lemma:onetwothree}
The properties \emph{implied by strong satisfaction} and \emph{implies
weak satisfaction} imply \emph{conservativity}. 
\end{lemma}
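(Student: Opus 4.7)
The plan is to exploit the fact that a standard table has exactly one possible world, namely itself, and then let the two properties squeeze the new notion of FD against the standard one in both directions.

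First, I would observe that when $R$ is a standard table, every tuple is already a singleton under both the vague and disjunctive interpretations, so there is a unique valuation $\mu \in \val(R)$, namely $\mu(R) = R$. Consequently, $\val(R) = \{R\}$. This means that the standard FD $X \to Y$ holds on $R$ in the classical sense if and only if it holds in every valuation, and if and only if it holds in some valuation. In other words, over standard tables the notions of weak satisfaction and strong satisfaction coincide with classical satisfaction.

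Next, I would argue the two directions of conservativity. For the forward direction, assume the standard FD $X \to Y$ holds on $R$. Then it holds in the unique possible world, hence in all possible worlds, so $R$ strongly satisfies $X \to Y$; by \emph{implied by strong satisfaction}, $X \to_{\mathrm{new}} Y$ holds on $R$. For the reverse direction, assume $X \to_{\mathrm{new}} Y$ holds on $R$. By \emph{implies weak satisfaction}, there is some valuation $\mu \in \val(R)$ in which the standard FD is satisfied; but the only valuation available is $R$ itself, so the standard FD $X \to Y$ holds on $R$.

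There is no real obstacle here beyond spelling out the observation that $\val(R) = \{R\}$ for standard $R$; both required implications then follow by direct invocation of the two hypothesized properties. The statement is essentially a bookkeeping lemma that justifies leaving conservativity off the official list of desirable properties.
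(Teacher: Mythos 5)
Your proof is correct and follows essentially the same argument as the paper's: the key observation that a standard table has a unique possible world, so strong satisfaction follows from classical satisfaction (giving the forward direction via \emph{implied by strong satisfaction}) and weak satisfaction forces classical satisfaction (giving the reverse direction via \emph{implies weak satisfaction}). Your explicit remark that $\val(R)=\{R\}$ merely makes the paper's implicit step visible.
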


\begin{proof}
Consider a standard table $R$. Write the new notion of FD as
$X \to_{\mathrm{new}} Y$ and suppose that it is both  \emph{implied
by strong satisfaction} and \emph{implies weak satisfaction}. 

 Suppose that $X\to Y$ holds, then it holds in all possible worlds
 because $R$ is a standard table, and by \emph{implied by strong
 satisfaction}, we have that $X \to_{\mathrm{new}} Y$ holds.  

Suppose that $X \to_{\mathrm{new}} Y$ holds, then by \emph{implies
weak satisfaction}, we have that $X\to Y$ must hold in at least one
possible world, but because $R$ is a standard table and there is only
one possible world, we have that $X\to Y$ must hold in the standard
sense. 
\end{proof}

Our desirable properties are not dependent on a particular model of
incomplete information. Since vague tables are a subset of 
disjunctive tables, all notions of FD that do not satisfy a given
property on vague tables also fail to satisfy it on disjunctive tables
or any generalization thereof. This is important given the
number of different  models for incomplete information,
each one with different expressive power and characteristics. 

A different list of properties could be chosen on intuitive
grounds. Our properties are slanted towards those that, we believe,
would facilitate the task of a designer in practical settings. Other
analyses may call for a different list of properties.
Moreover, these properties are not required, but we view them as desirable.

Beyond these properties, it is common to require that a new concept of
FD comes with a set of axioms which is sound and complete:  
\begin{itemize}
\item A set of axioms is sound if for
any $R$ and a set of FDs  satisfied by $R$, any FD derived from these
FDs using the axioms holds in $R$.
\item  A set of axioms is complete if  for any set of FDs, if an FD  is
true in every $R$ satisfying this set of FDs, then this FD can be
derived from the set of FDs using the axioms.
\end{itemize}
For example, Armstrong's axioms form a sound and complete set for 
standard tables. The following lemma generalizes this fact.

\begin{lemma}
\label{lemma:armstrong.conservativity}
Armstrong's axioms form a sound and complete set 
for any notion of FD satisfying both Armstrong's axioms and conservativity.
\end{lemma}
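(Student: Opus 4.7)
The plan is to verify soundness and completeness separately, leveraging the known fact that Armstrong's axioms are sound and complete over standard tables.

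For soundness, the argument is essentially immediate. Let $\to_{\mathrm{new}}$ be the notion of FD under consideration, and let $R$ be any table (of whichever class the notion is defined on) satisfying a set $F$ of FDs. Any FD $X \to Y$ derivable from $F$ using Armstrong's axioms is obtained by finitely many applications of reflexivity, augmentation, and transitivity. Since by hypothesis $\to_{\mathrm{new}}$ satisfies these three axioms, each application preserves satisfaction in $R$, so $X \to_{\mathrm{new}} Y$ holds in $R$.

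For completeness, I would argue by contrapositive and reduce to the standard case via conservativity. Suppose $X \to Y$ is not derivable from $F$ using Armstrong's axioms. By the classical completeness of Armstrong's axioms for standard tables, there exists a standard table $R^\star$ that satisfies (in the standard sense) every FD in $F$ but does not satisfy $X \to Y$. By conservativity, on the standard table $R^\star$ the relation $\to_{\mathrm{new}}$ coincides with the standard notion of FD, so $R^\star$ satisfies every FD in $F$ under $\to_{\mathrm{new}}$ but fails to satisfy $X \to_{\mathrm{new}} Y$. Since any standard table is, by assumption, a legitimate instance of the extended model (as is explicit for vague and disjunctive tables via Lemma~\ref{lemma:type-relations}), $R^\star$ is a witness showing that $X \to Y$ is not true in every $R$ satisfying $F$ under $\to_{\mathrm{new}}$. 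Contrapositively, whenever $X \to Y$ holds in every $R$ satisfying $F$, it must be derivable from $F$ using Armstrong's axioms.

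There is no real technical obstacle here; the only subtlety worth flagging is the implicit use of the fact that a standard table counts as a table of the extended class, so that a counterexample drawn from the standard setting transfers to the extended setting. This is guaranteed by the embedding of standard tables into vague or disjunctive tables (and is built into the hypothesis that $\to_{\mathrm{new}}$ is defined over a class strictly containing the standard tables, since conservativity would otherwise be vacuous).
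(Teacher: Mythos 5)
Your proof is correct; the paper actually states this lemma without proof (presenting it as a direct generalization of the classical fact for standard tables), and your argument---soundness by induction on derivations using closure under the three axioms, completeness by taking the classical standard-table counterexample and transferring it via conservativity into the extended model---is precisely the intended reasoning. The one subtlety you flag, that a standard table must be a legitimate instance of the extended class so the counterexample carries over, is the right thing to make explicit and is indeed guaranteed by Lemma~\ref{lemma:type-relations}.
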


It is immediate from the preceding lemma and
Lemma~\ref{lemma:onetwothree} that our desirable properties imply that
Armstrong's axioms are a sound and complete set of axioms.

\begin{lemma}
Armstrong's axioms form a sound and complete set 
for any notion of FD satisfying our desirable properties.
\end{lemma}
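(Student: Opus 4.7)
The plan is to chain together the two preceding lemmas and read off the conclusion directly. The statement concerns any notion of FD satisfying the desirable properties listed in Section~\ref{section:desirable}, which in particular include Armstrong satisfaction, \emph{implied by strong satisfaction}, and \emph{implies weak satisfaction}.

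First, I would invoke Lemma~\ref{lemma:onetwothree} on the two properties \emph{implied by strong satisfaction} and \emph{implies weak satisfaction}; this immediately yields that the notion of FD under consideration satisfies \emph{conservativity}, i.e.\ it agrees with the standard notion of FD on standard tables. Second, I would combine this with the fact that the notion also satisfies Armstrong's axioms (which is one of the listed desirable properties) and apply Lemma~\ref{lemma:armstrong.conservativity}, which says that Armstrong's axioms form a sound and complete set of axioms for any notion of FD that satisfies both Armstrong's axioms and conservativity. The conclusion follows.

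There is essentially no obstacle: the proof is a one-line composition of two prior lemmas, and the only minor care needed is to verify that we are indeed pulling exactly the properties we need out of the five desirable properties (Armstrong satisfaction plus the two satisfaction-direction properties), and to state clearly that the remaining properties (independence from irrelevant attributes, seamless satisfaction) are not required for this corollary. I would keep the written proof to a couple of sentences that just cite the two lemmas in sequence.
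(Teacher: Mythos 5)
Your proposal is correct and matches the paper's own argument exactly: the paper also derives conservativity from Lemma~\ref{lemma:onetwothree} (via \emph{implied by strong satisfaction} and \emph{implies weak satisfaction}) and then applies Lemma~\ref{lemma:armstrong.conservativity} together with Armstrong satisfaction. Nothing further is needed.
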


\section{Comparing Existing Notions of Functional Dependencies and
Proposed Desirable Properties} 
\label{section:motivation}

In this section, we examine how previously proposed notions of FD fare with
respect to the list of desired properties. 



Strong FDs can be restrictive, e.g., given
a schema 
``$\mathrm{employee}, \mathrm{department}, \mathrm{manager}$'' subject to 
``$\mathrm{employee}\to_S \mathrm{department}$'' and
``$\mathrm{department}\to_S \mathrm{manager}$'', then a table with the
two tuples 
\begin{align*}
 (\mathrm{Joe},\mathrm{Engineering},\{\mathrm{Gauss}, \mathrm{Newton}\}),\\ 
 (\mathrm{Jack}, \mathrm{Engineering},\{\mathrm{Gauss}, \mathrm{Newton}\})
\end{align*}
fails to satisfy ``$\mathrm{department} \to_S \mathrm{manager}$''---even
though it may seem intuitive that it should. Indeed a projection on the
attributes ``$\mathrm{department}$'' and ``$\mathrm{manager}$'' (i.e.\
$(\mathrm{Engineering},\{\mathrm{Gauss}, \mathrm{Newton}\})$) would
satisfy the strong FD\@. Hence,  
adding an otherwise irrelevant attribute (``$\mathrm{employee}$'')  can
violate the FD ``$\mathrm{department}\to_S \mathrm{manager}$''. 
An FD can only hold strongly
under some constrained scenarios.

Thus we might be tempted to consider weak FDs as a viable alternative:
if there is a possible world where the FDs are satisfied, then maybe
the data is allowable. However, recall that we defined weak FDs
individually; in actual database design, sets of FDs are
considered. Thus, we have to consider situations where  a set of FDs
is satisfied, and ask if they can be satisfied seamlessly. Consider
Fig.~\ref{fig:examples}: in Fig.~\ref{fig:ex1}, there are possible
worlds where FDs $\fd{A}{B}$ and $\fd{B}{C}$ are satisfied, but both
together cannot since there is no valuation of this vague table
satisfying both FDs.  The same problem happens in Figs.~\ref{fig:ex3}
and~\ref{fig:ex4}. 
 
\begin{figure}\centering
\subfloat[$A\to B$ and $B\to C$\label{fig:ex1} ]{%
\begin{tabular}{ccc}
$A$     &   $B$       &     $C$\\\hline
a1   & \{b1,b2\}   & c1\\
a1   & \{b2,b3\}  &  c2\\
\end{tabular}
} 
\subfloat[$A\to B$ and $C\to B$\label{fig:ex3}]{%
\begin{tabular}{ccc}
$A$    &    $B$       &     $C$\\\hline
a1   & b2  &  c1\\
a1   & \{b2,b3\}  &  c2\\
a2   &      b3    &   c2\\
\end{tabular}
}
\subfloat[$A\to B$ and $C\to B$\label{fig:ex4}]{%
\begin{tabular}{ccc}
$A$    &    $B$       &     $C$\\\hline
a1   & b2  &  c1\\
a2   & b2  &  c1\\
\{a1,a2\}   & \{b2,b3\}  &  \{c2,c3\} \\
a3   &      b3    &   c2\\
a3   &      b3    &   c3\\
\end{tabular}
}

\caption{Examples of vague tables\label{fig:examples} with accompanying FDs}
\end{figure}

Thus, weak FDs, as illustrated by Fig.~\ref{fig:examples}, do not have
seamless satisfaction. Moreover, the weak interpretation does not
satisfy Armstrong's axioms. For instance,  
transitivity does not hold under the weak interpretation: we can check
in Fig.~\ref{fig:ex1} that $A\to_W B$ and $B\to_W C$ hold,
but $A\to_w C$ does not.

\paragraph{Vertical FDs}

Das Sarma et al.'s  vertical FDs satisfy seamless
satisfaction and  satisfy Armstrong's
axioms~\cite{DBLP:conf/amw/SarmaUW09}. 
We see two issues:
\begin{itemize}
\item Vertical FDs fails to provide independence from irrelevant attributes. Indeed, they  depend on multivalued dependencies. The problem with
multivalued dependencies is that they break the rule that $X \to Y$
can be checked strictly by looking at the attributes in  $X \cup Y$
(independence from irrelevant attributes). 
Consider the schema $A,B,C$ and the table made of this single
tuple $(a_1,b_1,c_1)\;||\;(a_1,b_2,c_2)$. The vertical FD  $A \to_V B$
is not satisfied because of the $C$ attribute. (If we project the
tuple on the attributes $A$ and $B$, the FD $A \to_V B$ is satisfied.)
Das Sarma et al.\ provided such
an example themselves~\cite[Example 4.5]{DBLP:conf/amw/SarmaUW09}.  
%
%
\item Vertical FDs are not implied by strong satisfaction:
that is,  an FD may hold in every single possible
world of a table, and yet it does not hold as a vertical FD\@.
Our example with the single tuple $(a_1,b_1,c_1)\;||\;(a_1,b_2,c_2)$
illustrates this exact problem.  
\end{itemize}

\paragraph{Raju-Majumdar FFDs}
There are several different notions of FDs on fuzzy databases. In
general, fuzzy FDs are appealing in part because they satisfy
Armstrong's axioms. However, some notions are too weak in the  sense
that they fail to provide seamless satisfaction. Consider the fuzzy
FDs proposed by Raju and
Majumdar~\cite{Raju:1988:FFD:42338.42344}. Let us consider 
Fig.~\ref{fig:ex3}. Both $A \to_{\mathrm{RM}} B$ and
$C \to_{\mathrm{RM}} B$ are satisfied (see
Definition~\ref{def:ffd}). Indeed, let us consider 
$A \to_{\mathrm{RM}} B$. We have that $\mu_{\mathrm{EQ}}(t[A],t'[A])$
is greater than 0 only when  $t,t'$ are the first two tuples. Yet in
this case, $\mu_{\mathrm{EQ}}(t[B],t'[B])=1$ because b2 is contained
in \{b2,b3\}. Thus the FFD  $A \to_{\mathrm{RM}} B$ holds. By a
symmetrical argument $C \to_{\mathrm{RM}} B$ holds as well. But there
is no possible world where the table from Fig.~\ref{fig:ex3} satisfies
both $A\to B$ and $C\to B$.  Fig.~\ref{fig:ex4} is another example
 of the same phenomenon but where no two tuples have exactly the same
 values for attribute $A$, or attribute $C$. In fact,
 Fig.~\ref{fig:ex4} shows that the issue persists even if one
replaces  $\mu_{\mathrm{EQ}}(a,b)=\max(|a\cap b|/|a|, |a\cap b|/|b|)$
with $\mu_{\mathrm{EQ}}(a,b)=\min(|a\cap b|/|a|, |a\cap
b|/|b|)$. Thus, the problem does not arise with just one particular
resemblance relation, and hence is not limited to
Definition~\ref{def:ffd}. While other notions of fuzzy FD may fare
better, care  is needed when choosing a particular definition of fuzzy
FDs if seamless satisfaction is desired.

We can summarize our examination of existing definitions as follows (see
Table~\ref{table:properties}):  
\begin{itemize}
\item
 Strong and vertical FDs fail to satisfy independence from irrelevant
 attributes. 
 \item 
 Vertical FDs are not implied by strong satisfaction. 
 \item Weak FDs fail to satisfy Armstrong's axioms and seamless
 satisfaction. 
 \item Though a detailed investigation of fuzzy FDs is beyond our
 scope, we have shown that some fuzzy FDs fail to provide seamless
 satisfaction~\cite{Bosc:1998:FFD:274900.274903}.
\end{itemize}

\begin{table*}
\centering 
\caption{\label{table:properties}Properties of the various types of
 functions dependencies. IIA is short for independence from irrelevant
 attributes,  SS is short for strong satisfaction,
 WS is short for weak satisfaction, Armstrong  is short
 for Armstrong satisfaction, Seamless  is short for Seamless
 satisfaction. We do not include fuzzy FDs in this table since they
 rely on different models.}
\begin{tabular}{cccccccc}
FD type & Model type &  IIA   &   SS & WS &  Armstrong  &  Seamless \\ \hline 
Strong   &   disj.\ or vague  &  No   &   Yes  & Yes &  Yes    &    Yes \\[1ex]
Weak     &   disj.\ or vague  &   Yes  &      Yes& Yes&       No  &        No\\[1ex]
Vertical & disjunctive &    No    &   No & Yes & Yes   &     Yes\\[1ex]
\multirow{2}{*}{PFD}  &    disjunctive &    Yes  &    Yes& Yes &  No   &     No\\
      &   vague       &    Yes  &  Yes  & Yes &  Yes  &      Yes\\
\end{tabular}
\end{table*}

\section{The Influence of the Table Type on FD Properties}
\label{section:results}

In view of the negative results of the previous subsections, the
question arises as to whether it is possible for a notion of FD over
tables for incomplete information to have all the desirable
properties. It turns out that the answer depends crucially on the type
of table being considered. In the next subsection, we show that it is
not possible to satisfy all properties over disjunctive
tables. However, we then show that it is possible to satisfy all the
properties over vague tables by giving a new notion of
FD, \emph{P-functional dependency (PFD)} that has all the
properties. We then show, in Section~\ref{subsection:pfdsaregood}, how
PFDs are different from other notions of FD, and provide sound and
complete axioms for them over vague tables. 

\subsection{Properties over Disjunctive Tables}
Satisfaction of all desired properties over disjunctive tables is
impossible:

\begin{theorem}\label{theorem:baddisjunctivetable}
Over disjunctive tables, there cannot be a notion of FD (denoted
$X \to_{\mathrm{magic}} Y$) satisfying the following three  properties  
\begin{itemize}
\item  \emph{implied by strong satisfaction}: if the FD $X\to Y$ holds
strongly ($X\to_S Y$) then $X \to_{\mathrm{magic}} Y$ holds, 
\item  \emph{independence from irrelevant attributes} :
$X \to_{\mathrm{magic}} Y$ holds over $R$ if and only if it holds over
$\pi_{X\union Y}(R)$; and
\item \emph{seamless  satisfaction}: given a set of FDs
$X \to_{\mathrm{magic}} Y$, each holding in some possible world, there must
be a world where all $X \to Y$ hold in the standard sense. 
\end{itemize}
\end{theorem}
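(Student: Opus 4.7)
The plan is to construct an explicit counterexample: a disjunctive table $R$ together with two FDs such that each FD holds strongly on its projection onto the relevant attributes, yet no valuation of $R$ simultaneously satisfies both FDs. Combining \emph{implied by strong satisfaction} with \emph{independence from irrelevant attributes} will force each FD to hold in the magic sense on $R$, and then \emph{seamless satisfaction} will demand a joint world that does not exist.

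For the construction I would take schema $(A, B, C)$ and the two-tuple disjunctive table
\begin{align*}
t_1 &= (a, b_1, c_1) \; || \; (a, b_2, c_2),\\
t_2 &= (a, b_1, c_2) \; || \; (a, b_2, c_1),
\end{align*}
together with the FDs $A \to B$ and $A \to C$. The key observation is that although $t_1$ and $t_2$ are distinct as disjunctive tuples, their projections onto $\{A,B\}$ both have the same set of valuations $\{(a,b_1),(a,b_2)\}$ and are therefore equal as disjunctive tuples; the same holds for the $\{A,C\}$-projections. Thus $\pi_{AB}(R)$ collapses to a single disjunctive tuple whose two possible worlds are the singletons $\{(a,b_1)\}$ and $\{(a,b_2)\}$, so $A \to_S B$ holds on $\pi_{AB}(R)$; by symmetry, $A \to_S C$ holds on $\pi_{AC}(R)$.

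From this point the argument runs mechanically: \emph{implied by strong satisfaction} yields $A \to_{\mathrm{magic}} B$ on $\pi_{AB}(R)$ and $A \to_{\mathrm{magic}} C$ on $\pi_{AC}(R)$; IIA lifts both conclusions back to $R$; and \emph{seamless satisfaction} then demands a valuation of $R$ in which both FDs hold in the standard sense. I would rule this out by enumerating the four valuations: in every world the two resulting standard tuples share the value $a$ on $A$, and by the construction of $t_2$ as a ``$B$-to-$C$ twist'' of $t_1$, any valuation in which the two tuples agree on $B$ forces them to disagree on $C$, and vice versa. The resulting contradiction finishes the proof.

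The main conceptual obstacle is designing the entanglement between $B$ and $C$ that is preserved in $R$ but erased by either projection. This mechanism crucially exploits the failure of attribute-by-attribute equality for disjunctive tuples noted in Lemma~\ref{lemma:attbyatt}; without that failure the very same construction would already be blocked at the projection step, which is precisely why one should expect (and the paper will later confirm) that the analogous impossibility does not hold for vague tables.
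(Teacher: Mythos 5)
Your proof is correct and follows essentially the same strategy as the paper's: build a disjunctive table whose projections onto the relevant attribute sets each collapse (by set semantics) to a single disjunctive tuple, so each FD holds strongly there, then lift via independence from irrelevant attributes and contradict seamless satisfaction by enumerating the valuations. The only difference is the witness table --- you use three attributes with two FDs sharing the left-hand side $A$, while the paper uses four attributes with $A\to B$ and $C\to D$ on disjoint attribute sets --- but the underlying mechanism (entanglement across attributes that projection erases, exploiting the failure of attribute-by-attribute equality for disjunctive tuples) is identical.
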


\begin{proof}
Consider any notion of FD satisfying the properties 
\emph{implied by strong satisfaction}, 
 \emph{independence from irrelevant attributes}
and   \emph{seamless  satisfaction} over disjunctive tables. Let us
call this form of FD ``magic''. Observe that the  projection of the
table from Fig.~\ref{figure:badexample} to  $(A,B)$ is the {single}
disjunctive tuple $(a_1,b_1) \; || \; (a_1,b_2)$. By the
property \emph{implied by strong satisfaction}, we have that
$A\to_{\mathrm{magic}} B$ must be satisfied on this projection. By
the  \emph{independence from irrelevant attributes}, we must have that
$A\to_{\mathrm{magic}} B$ holds over the original table. 
Similarly, the projection of the table on $(C,D)$ is the {single}
tuple $(c_1,d_1) \; || \; (c_1,d_2)$ and, again by the
property \emph{implied by strong satisfaction}, we must have that
$C\to_{\mathrm{magic}} D$ is satisfied on this projection. Again, by
the  \emph{independence from irrelevant attributes}, we have that
$C\to_{\mathrm{magic}} D$ holds over the original table. 
Yet there is no possible world implied by the table where both $A\to
B$ and $C\to D$ are satisfied, thus contradicting  \emph{seamless
satisfaction}. 
\end{proof}

\begin{figure}\centering
\begin{tabular}{c}
{$R(A,B,C,D)$}\\ \hline
$(a_1,b_1,c_1,d_1)\;||\;(a_1,b_2,c_1,d_2)$\\ 
$(a_1,b_2,c_1,d_1)\;||\;(a_1,b_1,c_1,d_2)$\\
\end{tabular} 
\caption{\label{figure:badexample}Disjunctive table satisfying both
$A\to_{\mathrm{PFD}} B$ and  $C\to_{\mathrm{PFD}} D$ such that no
valuation of the table satisfying the FDs is possible.} 
\end{figure}

\subsection{P-Functional Dependencies (PFD)}
\label{subsection:pfds}

We are looking for a concept of FD that has all the desirable
properties: independence from irrelevant attributes, implied by strong
satisfaction, satisfaction of Armstrong's axioms,  seamless
satisfaction and computationally inexpensive.
Moreover, we want a concise and convenient definition to show that our desirable properties can indeed be satisfied without a contrived definition.

   Our intuition is as follows: given an FD $X\to Y$ with $X$ and $Y$
   disjoint, we want that whenever $t_1[X]$ and  
$t_2[X]$ may be equal in some possible world, then $t_1[Y]$ and 
$t_2[Y]$ must be identical.
We propose our concept of functional dependencies of possible
   worlds,  \emph{P-functional dependency}, or 
   \emph{PFD} for short.  
   (See Subsection~\ref{subsection:fds} for our notation.)

\begin{defin}
\label{defin:pfd}
A PFD $X \to_{\mathrm{PFD}} Y$ holds in $R$ whenever for any two  tuples
$t_1,t_2$, $t_1[X=\bar{a}][Y] = t_2[X=\bar{a}][Y]$ for all
$\bar{a} \in t_1[X] \intersection t_2[X]$.
\end{defin}

For example, to check that
the PDF ``$\mathrm{employee}\to_{\mathrm{PFD}} \mathrm{superior}$''
holds, we might begin by checking that the superiors given to John are
the same in every pair of tuples $t,t'$:  
\begin{align*}t[\mathrm{employee}
&= \mathrm{John}][\mathrm{superior}]\\&=t'[\mathrm{employee}
= \mathrm{John}][\mathrm{superior}].\end{align*} 
We would then repeat this check for every possible employee. Of
course, more efficient implementations using indexes are possible (see
Section~\ref{section:efficiency}).

\begin{figure}\centering
\begin{tabular}{c}
$R(A,B,C)$\\ \hline
$(a,b,c)\;||\;(a,b',c')$\\ 
$(a,b',c)\;||\;(a,b,c')$\\ 
\end{tabular}
\caption{Table $R$: Disjunctive table satisfying $A\to_{\mathrm{PDF}} C$ but not  $AB\to_{\mathrm{PDF}} CB$.}
\label{fig:newexample}
\end{figure}

As we show in Theorem~\ref{theorem:all}, no notion of FD
can satisfy all our desirable properties over disjunctive tables: PFDs
are no exception.  
PFDs fail to satisfy Armstrong's axioms and seamless satisfaction
over disjunctive tables. This is first illustrated by
Fig.~\ref{fig:newexample} where $A\to_{\mathrm{PDF}} C$ holds but not
$AB\to_{\mathrm{PDF}} CB$ which violates the augmentation axiom. 
Moreover, PFDs do not imply the existence of a seamless
valuation over disjunctive tables. Consider the  example of a
2-tuple disjunctive table in Figure~\ref{figure:badexample}.
The FDs $A\to_{\mathrm{PFD}} B$ and  $C\to_{\mathrm{PFD}} D$
hold, but no seamless valuation exists satisfying these two FDs.
\ignore{%
One aspect of this notion is noteworthy: we require
equality of the whole right-hand side, as opposed to a more relaxed
condition, like asking that $t_1[X=\bar{a}][B] = t_2[X=\bar{a}][B]$
for all $B \in Y$. The reason for the stronger requirement is that
relaxing it leads to unexpected results: consider the table $R$ in 
Fig.~\ref{fig:newexample}, and consider the PFD $A \to_{\mathrm{PFD}} \{B,C\}$:
this PFD would be satisfied in $R$ according to the weaker condition,
but there is no possible world of $R$ where it holds, thus breaking
{seamless satisfaction}. 
}

Over vague tables (and only over vague tables) PFDs satisfy the following
technical lemma that allows us to reduce any PFD $X\to_{\mathrm{PFD}}
Y$  to an equivalent set of PFDs of the form $X\to_{\mathrm{PFD}} A$
where $A$ is a singleton disjoint from $X$. 

\begin{lemma}\label{lemma:wecandecompose}
Over vague tables, $X\to_{\mathrm{PFD}} Y$ holds if and only if
$X\to_{\mathrm{PFD}} A$ for all attributes $A\in Y-X$.
Moreover, we can check that such  PFD $X\to_{\mathrm{PFD}} Y$ where
$Y$ is disjoint from $X$ holds by checking that whenever
$t_1[X] \intersection t_2[X]$ then $t_1[Y]=t_2[Y]$. 
\end{lemma}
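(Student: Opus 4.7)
The plan is to prove the second statement first (the simple check when $X$ and $Y$ are disjoint) and then combine it with Lemma~\ref{lemma:attbyatt} to obtain the decomposition. The whole argument rests on a single observation about vague tuples: because the attribute values of a vague tuple vary independently, the selection operator $t[X=\bar a]$ behaves much more simply on vague tuples than on general disjunctive tuples.

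First I would fix a vague tuple $t$ and view it as the disjunctive tuple whose disjuncts are the valuations of $t$, i.e.\ the standard tuples in $\prod_A t[A]$. If $\bar a \in t[X]$, the disjuncts that agree with $\bar a$ on $X$ are exactly $\{\bar a\}\times\prod_{A\notin X} t[A]$, and projecting onto a $Y$ disjoint from $X$ therefore yields $t[X=\bar a][Y] = t[Y]$ as a disjunctive tuple, independently of $\bar a$. Consequently, when $X$ and $Y$ are disjoint the PFD condition $t_1[X=\bar a][Y]=t_2[X=\bar a][Y]$ for every $\bar a\in t_1[X]\cap t_2[X]$ collapses to the condition ``if $t_1[X]\cap t_2[X]\ne\emptyset$ then $t_1[Y]=t_2[Y]$'', which is the second statement of the lemma.

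For the first statement I would write $Y=(Y\cap X)\cup(Y-X)$. For any $\bar a\in t_1[X]\cap t_2[X]$, the $(Y\cap X)$-component of $t_i[X=\bar a][Y]$ is forced to coincide with $\bar a[Y\cap X]$ and hence agrees for $i=1,2$; thus $X\to_{\mathrm{PFD}} Y$ holds on a vague table if and only if the restricted condition on $Y-X$ holds, namely $X\to_{\mathrm{PFD}}(Y-X)$. Applying the second statement (now available) to the disjoint pair $X$ and $Y-X$, this is equivalent to requiring $t_1[Y-X]=t_2[Y-X]$ whenever $t_1[X]\cap t_2[X]\ne\emptyset$. By Lemma~\ref{lemma:attbyatt}, for vague tuples $t_1[Y-X]=t_2[Y-X]$ is equivalent to $t_1[A]=t_2[A]$ for every $A\in Y-X$; and by the second statement again, the resulting condition is precisely $X\to_{\mathrm{PFD}} A$ for every $A\in Y-X$, yielding the claimed decomposition.

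The main obstacle is the first step: one must correctly reinterpret the selection/projection notation $t[X=\bar a][Y]$ (which was introduced for disjunctive tuples) on vague tuples, and make explicit the fact that attributes of a vague tuple are uncoupled so that $t[X=\bar a][Y]$ degenerates to $t[Y]$ when $Y$ is disjoint from $X$. This is precisely where the argument would break down on disjunctive tables (as illustrated by Figure~\ref{fig:newexample}), so it is essential to invoke the vague-table hypothesis here and also later, in the appeal to Lemma~\ref{lemma:attbyatt}, which likewise fails for disjunctive tuples.
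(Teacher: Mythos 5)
Your proof is correct and rests on the same ingredients as the paper's: the observation that for $\bar a\in t_1[X]\cap t_2[X]$ the $X\cap Y$ components of $t_i[X=\bar a][Y]$ are forced to equal the corresponding components of $\bar a$, together with Lemma~\ref{lemma:attbyatt} to pass between equality on $Y-X$ and equality attribute by attribute. The only difference is organizational: you first prove the ``Moreover'' clause explicitly (via the uncoupling fact $t[X=\bar a][Y]=t[Y]$ for $Y$ disjoint from $X$) and then derive the decomposition from it, whereas the paper proves the decomposition directly and leaves the ``Moreover'' clause implicit.
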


The proof of this Lemma is left for the Appendix.

\paragraph{PFDs vs. 
Vertical FDs}

There are  differences between vertical FDs and PFDs. Consider
the Table~$U$ in Fig.~\ref{fig:exampleu}. In this table, the
vertical FD $\mathrm{SSN} \to_V \mathrm{Name}$ does not hold, but the
PFD $\mathrm{SSN} \to_{\mathrm{PFD}} \mathrm{Name}$ does.

\begin{figure}\centering
\begin{tabular}{cc}
{ID} & {$U$(SSN, Name)} \\ \hline
$r_1$ & (1,Tom) $||$ (1, Thomas) \\ 
$r_2$ & (1,Tom) $||$ (1,Thomas) $||$ (2,Tom)$||$(2,Thomas) \\ 
\end{tabular}
\caption{Table $U$: the
vertical FD $\mathrm{SSN} \to_V \mathrm{Name}$ does not hold while the
PFD $\mathrm{SSN} \to_{\mathrm{PFD}} \mathrm{Name}$ does.} 
\label{fig:exampleu}
\end{figure}

If one compares the definitions of PFDs and vertical FDs, it is
immediate that vertical FDs are strictly stronger than PFDs. 

\begin{corollary}
If $X \to_V Y$ then $X \to_{\mathrm{PFD}} Y$.
\end{corollary}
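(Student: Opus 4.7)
The plan is essentially to unfold definitions, since the implication falls out by inspection. I would place the definitions of $X\to_V Y$ (Definition~\ref{defin:verticalfds}) and $X\to_{\mathrm{PFD}} Y$ (Definition~\ref{defin:pfd}) side by side and observe that the first of the three conditions imposed by the vertical FD, namely that $t_1[X=\bar{a}][Y]=t_2[X=\bar{a}][Y]$ for all $\bar{a}\in t_1[X]\intersection t_2[X]$ for every pair of tuples $t_1,t_2$, is literally the defining clause of a PFD.

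Therefore, assuming $X\to_V Y$ and fixing any two tuples $t_1,t_2$, condition (1) of the vertical FD gives us exactly the requirement demanded by the PFD, while conditions (2) and (3) (the Cartesian product condition and the multivalued dependency $X\twoheadrightarrow Y-X$) are simply dropped. Hence $X\to_{\mathrm{PFD}} Y$ holds.

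There is no real obstacle here, so the proof is essentially a one-liner; the corollary is intended mainly to frame the relationship between the two notions. For context, I would also point back to Fig.~\ref{fig:exampleu}, which already shows that the converse fails: in that example $\mathrm{SSN}\to_{\mathrm{PFD}}\mathrm{Name}$ holds but $\mathrm{SSN}\to_V \mathrm{Name}$ does not, so the implication in the corollary is strict.
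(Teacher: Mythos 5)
Your proof is correct and matches the paper's reasoning exactly: the paper states the corollary as immediate from comparing the two definitions, since clause (1) of Definition~\ref{defin:verticalfds} is precisely the defining condition of Definition~\ref{defin:pfd}, and clauses (2) and (3) only add further constraints. Your added remark that Fig.~\ref{fig:exampleu} witnesses the strictness of the implication is also consistent with the paper's discussion.
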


\paragraph{PFDs vs. Fuzzy FDs}
Recall that a fuzzy Raju-Majumdar FD holds whenever 
\[\mu_{\mathrm{EQ}}(t[Y],t'[Y])\geq \mu_{\mathrm{EQ}}(t[X],t'[X])\] for
all pairs of tuples $t',t'$. We can express PFDs is similar terms.
For simplicity, suppose that $X$ and $Y$ are disjoint
($X \intersection Y=\emptyset$). Consider the PFD
$X \to_{\mathrm{PFD}} Y$. The FD applies to tuples $t$ and $t'$ if and
only if there exists $\bar{a} \in t[X] \intersection t'[X]$; and in such
cases $t[Y]=t'[Y]$. In turn, over vague tables,  there exists such a
$\bar{a}$ if and only if $\mu_{\mathrm{EQ}}(t[X],t'[X])>0$.
 Thus, an FD holds if whenever $\mu_{\mathrm{EQ}}(t[X],t'[X])>0$ then
$t[Y]=t'[Y]$.  This is strictly stronger than
$\mu_{\mathrm{EQ}}(t[Y],t'[Y])\geq \mu_{\mathrm{EQ}}(t[X],t'[X])$. In
other words, PFDs are stronger than Raju-Majumdar fuzzy FDs. 


\subsection{PFDs and the Desirable Properties}
\label{subsection:pfdsaregood}

In this section, we show that PFDs, in addition to being concisely
defined, have all the properties listed in
Section~\ref{section:desirable} over vague tables, establishing Theorem~\ref{theorem:all}.

\begin{theorem}
\label{theorem:all}
PFDs satisfy the following properties over vague tables:
\begin{enumerate}
\item independence from irrelevant attributes,
\item implied by strong satisfaction,
\item implies weak satisfaction,
\item Armstrong satisfaction,
\item  seamless satisfaction,
\end{enumerate}
\end{theorem}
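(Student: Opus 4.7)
My plan is to verify each of the five properties in turn, leveraging Lemma~\ref{lemma:wecandecompose} heavily to reduce to the case where the right-hand side is a single attribute disjoint from the left. The key observation is that over vague tables (unlike disjunctive ones), equality of vague tuples can be checked attribute by attribute (Lemma~\ref{lemma:attbyatt}), so the condition ``$t_1[X]\cap t_2[X]\neq\emptyset$ implies $t_1[Y]=t_2[Y]$'' from Lemma~\ref{lemma:wecandecompose} is a clean equivalent to the definition.

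For \emph{independence from irrelevant attributes}, I would note that for any vague tuple $t$ and any $\bar{a}\in t[X]$, the expression $t[X=\bar{a}][Y]$ depends only on $t[X\cup Y]$; since projection in vague tables can only collapse duplicate tuples (which trivially satisfy any PFD with themselves), the PFD holds on $R$ iff it holds on $\pi_{X\cup Y}(R)$. For \emph{implied by strong satisfaction}, I would argue contrapositively: if PFD fails, there are $t_1,t_2$ and $\bar{a}\in t_1[X]\cap t_2[X]$ with $t_1[Y\setminus X]\neq t_2[Y\setminus X]$ as vague tuples; by Lemma~\ref{lemma:attbyatt} they differ on some $A\in Y\setminus X$, so pick a valuation that fixes both tuples on $X$ to $\bar{a}$ and picks distinct values on $A$ from the two differing sets, exhibiting a world violating $X\to Y$. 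For \emph{implies weak satisfaction}, I would fix a total order on every domain and use the canonical ``lexicographic minimum'' valuation $\mu$; if $\mu(t_1)[X]=\mu(t_2)[X]$, this common value lies in $t_1[X]\cap t_2[X]$, so by PFD $t_1[Y]=t_2[Y]$ as vague tuples, forcing $\mu(t_1)[Y]=\mu(t_2)[Y]$.

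For \emph{Armstrong satisfaction}, reflexivity ($Y\subseteq X$) is immediate from the definition since $t[X=\bar{a}][Y]=\bar{a}[Y]$ for both tuples. For augmentation $XZ\to_{\mathrm{PFD}} YZ$, if $\bar{a}\in t_1[XZ]\cap t_2[XZ]$, then its $X$-restriction witnesses $X\to_{\mathrm{PFD}} Y$, giving $t_1[Y\setminus X]=t_2[Y\setminus X]$, and combined with equality on the $Z$-part this yields the required equality on $YZ$. Transitivity is where Lemma~\ref{lemma:wecandecompose} pays off most: assuming $X\to_{\mathrm{PFD}} Y$ and $Y\to_{\mathrm{PFD}} Z$, if $t_1[X]\cap t_2[X]\neq\emptyset$ then by the first PFD the tuples agree on all of $Y\setminus X$ (and on $X\cap Y$ since they share an $X$-element), so $t_1[Y]\cap t_2[Y]\neq\emptyset$, at which point the second PFD hands us $t_1[Z]=t_2[Z]$.

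Finally, \emph{seamless satisfaction} follows from a single reuse of the canonical valuation above: the same $\mu$ simultaneously witnesses the standard version of every PFD in the set, because the ``if $\mu(t_1)[X]=\mu(t_2)[X]$ then $\mu(t_1)[Y]=\mu(t_2)[Y]$'' argument goes through independently for each left-/right-hand side pair. I expect the main subtlety to be the careful handling of the notation $t[X=\bar{a}][Y]$ for vague tuples when $X$ and $Y$ overlap, which is why I would open the proof by invoking Lemma~\ref{lemma:wecandecompose} to reduce to the disjoint case and use the clean criterion ``$t_1[X]\cap t_2[X]\neq\emptyset \Rightarrow t_1[Y]=t_2[Y]$'' throughout.
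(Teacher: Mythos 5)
Your proposal is correct, and for the two hardest items it takes a genuinely different route from the paper. For independence from irrelevant attributes, implied-by-strong, and Armstrong's axioms, your arguments match the paper's in substance (the paper proves augmentation and transitivity directly with the $t[X=\bar{a}][Y]$ notation, whereas you route everything through the simplified criterion of Lemma~\ref{lemma:wecandecompose}; both work, and your explicit remark that projection onto $X\cup Y$ can only collapse tuples that trivially satisfy the condition is a detail the paper leaves implicit). The real divergence is in \emph{implies weak satisfaction} and \emph{seamless satisfaction}: the paper proves the former by a contradiction argument and the latter constructively, via Algorithm~\ref{algo:val}, which sweeps over attributes and propagates a chosen value through the set $s$ of tuples reachable via shared left-hand-side valuations. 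You instead fix one deterministic valuation $\mu$ (a choice function on finite nonempty subsets of each domain, e.g.\ a lexicographic minimum) and observe that it simultaneously realizes every PFD in the set: if $\mu(t_1)[X]=\mu(t_2)[X]$ then that common value lies in $t_1[X]\cap t_2[X]$, the PFD forces $t_1[A]=t_2[A]$ as sets for each $A\in Y-X$, and determinism of $\mu$ then forces $\mu(t_1)[A]=\mu(t_2)[A]$. This is sound, proves a stronger statement (every deterministic valuation is seamless, not merely that one exists), subsumes weak satisfaction as the one-FD special case, and avoids the propagation machinery entirely; what the paper's algorithmic construction buys in exchange is an explicit enforcement procedure and freedom to randomize the chosen value per connected component. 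Two small points of care in writing it up: in transitivity, the tuples need not \emph{agree} on $X\cap Y$ (they merely have nonempty intersection there, which is all you need to conclude $t_1[Y]\cap t_2[Y]\neq\emptyset$), and the attributes of $Z\cap Y$ outside $X$ get their equality from the first PFD rather than the second; your stated reduction to singleton right-hand sides disjoint from the left absorbs both issues.
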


 We check each property separately.

\paragraph{Independence from Irrelevant Attributes} 
A PFD is independent 
from irrelevant attributes; by examining our definition, we can check
$X \to_{PFD} Y$ either by looking on $R$ or on $\pi_{X\cup Y}(R)$
alone. That is, $X \to_{PFD} Y$ holds in $R$ if and only if it holds in
$\pi_{X\cup Y}(R)$. 

\paragraph{Between Strong and Weak}
We also have that the new definition is ``in between'' strong and weak
satisfaction:
$X \to_S Y \Rightarrow X \to_{\mathrm{PFD}} Y $ and
$X \to_{\mathrm{PFD}} Y \Rightarrow X \to_W Y$. 

\begin{lemma} (Implied by strong satisfaction)
$X \to_S Y \Rightarrow X \to_{\mathrm{PFD}} Y $.
\label{lemma:strongweak}
\end{lemma}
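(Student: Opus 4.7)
The plan is to argue by contraposition: suppose $X \to_{\mathrm{PFD}} Y$ fails on a vague table $R$ and exhibit a possible world where $X \to Y$ fails in the standard sense, thereby contradicting $X \to_S Y$.

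First, I would unpack the failure of the PFD. If $X \to_{\mathrm{PFD}} Y$ fails, then there exist vague tuples $t_1, t_2 \in R$ and some $\bar{a} \in t_1[X] \intersection t_2[X]$ with $t_1[X=\bar{a}][Y] \neq t_2[X=\bar{a}][Y]$. Without loss of generality pick $\bar{b} \in t_1[X=\bar{a}][Y] \setminus t_2[X=\bar{a}][Y]$. By definition of $t_i[X=\bar{a}][Y]$, this guarantees: (i) there is a standard valuation of $t_1$ whose $X$-projection is $\bar{a}$ and whose $Y$-projection is $\bar{b}$; and (ii) \emph{every} standard valuation of $t_2$ whose $X$-projection equals $\bar{a}$ has a $Y$-projection distinct from $\bar{b}$.

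Next, I would build a valuation $\mu \in \val(R)$ that witnesses a violation of $X \to Y$. Because $\bar{a} \in t_2[X]$ and, in a vague tuple, each attribute value is selected independently (cf.\ Lemma~\ref{lemma:attbyatt}), we can extend $\bar{a}$ to a full standard valuation of $t_2$; let its $Y$-projection be $\bar{b}'$, necessarily different from $\bar{b}$ by~(ii). Choose arbitrary values for the remaining attributes of $t_1$ and $t_2$, and arbitrary valuations for all other tuples in $R$, to obtain a possible world $\mu(R)$. In this world, $\mu(t_1)[X] = \mu(t_2)[X] = \bar{a}$ but $\mu(t_1)[Y] = \bar{b} \neq \bar{b}' = \mu(t_2)[Y]$, so the standard FD $X \to Y$ is violated, contradicting $X \to_S Y$.

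The main obstacle I anticipate is ensuring the construction is faithful to the vague-table semantics, namely that the attribute-wise independent selection is legitimate and that duplicate elimination cannot collapse the two offending standard tuples into one. The former is exactly the content of Lemma~\ref{lemma:attbyatt}; the latter follows because $\mu(t_1)$ and $\mu(t_2)$ already disagree on the $Y$-attributes, so they remain distinct tuples of the possible world and the violation persists.
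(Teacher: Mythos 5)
Your proof is correct and follows essentially the same route as the paper's: contraposition, picking $\bar{b} \in t_1[X=\bar{a}][Y] \setminus t_2[X=\bar{a}][Y]$, and extending to a possible world where $\mu(t_1)[X]=\mu(t_2)[X]=\bar{a}$ but the $Y$-projections differ. Your extra remark that duplicate elimination cannot merge the two witnessing tuples (since they already disagree on $Y$) is a point the paper leaves implicit, but it is not a different argument.
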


\begin{lemma} (Implies weak satisfaction)
$X \to_{\mathrm{PFD}} Y \Rightarrow X \to_W Y$.
\label{lemma:pfdweak}
\end{lemma}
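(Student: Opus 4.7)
The plan is to construct, given any vague table $R$ that satisfies $X \to_{\mathrm{PFD}} Y$, an explicit valuation $\mu \in \val(R)$ whose resulting standard table satisfies $X \to Y$ in the conventional sense. First I would reduce to the case $X \cap Y = \emptyset$. Letting $Z = Y - X$, Lemma~\ref{lemma:wecandecompose} gives $X \to_{\mathrm{PFD}} Z$, and it also tells us that over vague tables this is equivalent to the clean condition: whenever $t_1[X] \cap t_2[X] \neq \emptyset$, then $t_1[Z] = t_2[Z]$ as vague tuples, which by Lemma~\ref{lemma:attbyatt} means $t_1[A] = t_2[A]$ for every $A \in Z$. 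Any standard valuation satisfying $X \to Z$ automatically satisfies $X \to Y$ (since the $X \cap Y$ part follows from $X \to X$), so it suffices to produce $\mu$ witnessing $X \to Z$.

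Then I would build $\mu$ in three stages. For each tuple $t \in R$ and each $A \in X$, pick an arbitrary $\mu(t)[A] \in t[A]$. Group the vague tuples of $R$ by their resulting $X$-image $\mu(t)[X]$. For any two tuples $t_1, t_2$ in the same group, their common $\mu(t_i)[X]$ witnesses $t_1[X] \cap t_2[X] \neq \emptyset$, so by the simplified PFD condition $t_1[B] = t_2[B]$ for every $B \in Z$. I can therefore pick one element $y^B$ of this common set and assign $\mu(t)[B] = y^B$ uniformly to every tuple in the group. For attributes outside $X \cup Z$, pick any element of $t[A]$.

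Finally I would verify the claim: if $s_1, s_2 \in \mu(R)$ satisfy $s_1[X] = s_2[X]$, they arise from vague tuples $t_1, t_2$ landing in the same group, whence $s_1[B] = s_2[B]$ for every $B \in Z$ by construction, and therefore $s_1[Y] = s_2[Y]$. The only conceptual step that requires care is the uniform $Z$-assignment within each group, i.e.\ knowing that tuples grouped by their chosen $X$-image actually have identical vague $Z$-values; this is exactly what Lemma~\ref{lemma:wecandecompose} delivers, and it is the place where the argument would fail over disjunctive tables (as already witnessed by Fig.~\ref{figure:badexample}).
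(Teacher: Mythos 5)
Your proof is correct, but it takes a genuinely different route from the paper's. The paper argues by contradiction: it assumes no possible world satisfies $X \to Y$, asserts that this forces a pair of tuples with $t_1[X]\cap t_2[X]\neq\emptyset$ and $t_1[Y]\cap t_2[Y]=\emptyset$, and derives a contradiction with the PFD. You instead construct an explicit witnessing valuation: fix the $X$-attributes arbitrarily, group tuples by the resulting $X$-image, and use Lemma~\ref{lemma:wecandecompose} to assign a single $Z$-value per group (your construction is essentially a one-FD special case of Algorithm~\ref{algo:val}). Your route is arguably more watertight in the vague-table setting: the paper's claim that failure of weak satisfaction must be witnessed by a two-tuple obstruction with \emph{disjoint} $Y$-projections does not hold of arbitrary vague tables (three tuples whose $Y$-sets pairwise intersect but have empty common intersection already defeat it), and is only harmless here because the PFD hypothesis forces the relevant $Y$-sets to be \emph{equal} rather than merely intersecting. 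The trade-off is scope: by routing through Lemma~\ref{lemma:wecandecompose} your argument is confined to vague tables, whereas the paper's two-tuple argument is phrased model-agnostically (and Table~\ref{table:properties} claims weak satisfaction for PFDs over disjunctive tables as well); since the lemma is invoked to establish Theorem~\ref{theorem:all}, which concerns vague tables only, this restriction is acceptable, and you correctly flag that the uniform $Z$-assignment is exactly the step that breaks down for disjunctive tuples. One small point worth making explicit: tuples falling in \emph{different} groups receive distinct $X$-images under $\mu$ and so impose no constraint on one another, which is why the arbitrary initial choice on $X$ is harmless.
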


\paragraph{Armstrong satisfaction}
While augmentation is not satisfied over disjunctive tables, PFDs
satisfy all of Armstrong's axioms over vague tables.

\begin{lemma} (Armstrong satisfaction)
PFDs obey Armstrong's axioms over vague tables.
\label{lemma:armstrong}
\end{lemma}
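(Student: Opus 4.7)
The plan is to reduce each axiom to the singleton right-hand-side form via Lemma~\ref{lemma:wecandecompose}, and then to exploit the attribute-by-attribute behaviour of equality and intersection on vague tuples (Lemma~\ref{lemma:attbyatt} and the attribute-wise definition of $\intersection$). Concretely, for disjoint $X,Y$, $X\to_{\mathrm{PFD}} Y$ becomes the statement ``whenever $t_1[X]\intersection t_2[X]\neq \emptyset$ then $t_1[Y]=t_2[Y]$'', and both predicates decompose attributewise: $t_1[W]\intersection t_2[W]$ is non-empty iff $t_1[A]\intersection t_2[A]\neq \emptyset$ for every $A\in W$, and $t_1[W]=t_2[W]$ iff $t_1[A]=t_2[A]$ for every $A\in W$.

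With these tools in hand, reflexivity is immediate: if $Y\subseteq X$, then $Y-X=\emptyset$ and Lemma~\ref{lemma:wecandecompose} makes the condition vacuously true. For augmentation, assume $X\to_{\mathrm{PFD}} Y$ and pick any $A\in YZ-XZ$; then $A\in Y-X$, so Lemma~\ref{lemma:wecandecompose} yields $X\to_{\mathrm{PFD}} A$. Whenever $t_1[XZ]\intersection t_2[XZ]$ is non-empty, its restriction to $X$ is non-empty as well, hence $t_1[A]=t_2[A]$. Reassembling the singletons via Lemma~\ref{lemma:wecandecompose} delivers $XZ\to_{\mathrm{PFD}} YZ$.

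Transitivity is the main obstacle and is where the vague-tuple structure is actually exploited. Assume $X\to_{\mathrm{PFD}} Y$ and $Y\to_{\mathrm{PFD}} Z$, pick $A\in Z-X$, and suppose $t_1[X]\intersection t_2[X]\neq \emptyset$. If $A\in Y$ then $A\in Y-X$ and $X\to_{\mathrm{PFD}} A$ already gives $t_1[A]=t_2[A]$. Otherwise $A\in Z-Y$, and to invoke $Y\to_{\mathrm{PFD}} A$ we must upgrade the hypothesis $t_1[X]\intersection t_2[X]\neq \emptyset$ to $t_1[Y]\intersection t_2[Y]\neq \emptyset$. This is the delicate step: for each $B\in Y$, either $B\in X$ (so $t_1[B]\intersection t_2[B]$ is non-empty by hypothesis), or $B\in Y-X$ (so $t_1[B]=t_2[B]$ by $X\to_{\mathrm{PFD}} B$, and the common non-empty set itself witnesses the intersection). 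The attributewise characterization then lifts this to $t_1[Y]\intersection t_2[Y]\neq \emptyset$, and $Y\to_{\mathrm{PFD}} A$ forces $t_1[A]=t_2[A]$. A final application of Lemma~\ref{lemma:wecandecompose} repackages the singletons into $X\to_{\mathrm{PFD}} Z$. The passage from an intersection on $X$ to an intersection on $Y$ is precisely what fails in the disjunctive setting where Lemma~\ref{lemma:attbyatt} breaks, which also explains why the counter-example of Fig.~\ref{fig:newexample} is disjunctive rather than vague.
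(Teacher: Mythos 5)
Your proof is correct, but it follows a genuinely different route from the paper's. The paper argues directly in the selection--projection notation: for reflexivity it computes $t[X=\bar{a}][X]=\{\bar{a}\}$; for augmentation it propagates $t_1[X=\bar{a}][Y]=t_2[X=\bar{a}][Y]$ to $t_1[XZ=\bar{a}\bar{b}][YZ]=t_2[XZ=\bar{a}\bar{b}][YZ]$; and for transitivity it writes $t_i[X=\bar{x}][Z]$ as the union of $t_i[Y=\bar{y}][Z]$ over $\bar{y}\in t_i[X=\bar{x}][Y]$. You instead first normalize every PFD via Lemma~\ref{lemma:wecandecompose} to the form $X\to_{\mathrm{PFD}}A$ with $A$ a singleton outside $X$, restated as ``$t_1[X]\intersection t_2[X]\neq\emptyset$ implies $t_1[A]=t_2[A]$,'' and then reason attribute by attribute. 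The two approaches buy different things: the paper's is closer to the raw Definition~\ref{defin:pfd} and so transfers more of its steps to the disjunctive setting (making visible exactly which step breaks there), while yours isolates the vague-table structure into two clean facts --- attributewise decomposition of equality and of intersection non-emptiness --- which makes the key step of transitivity (lifting $t_1[X]\intersection t_2[X]\neq\emptyset$ to $t_1[Y]\intersection t_2[Y]\neq\emptyset$, using that entries of vague tuples are non-empty sets) especially transparent, and correctly explains why Fig.~\ref{fig:newexample} must be disjunctive. One caveat: your argument leans on Lemma~\ref{lemma:wecandecompose} for all three axioms, so its correctness is only as local as that lemma's proof; the paper's version of reflexivity and augmentation does not need it.
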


The proofs of the previous lemmas are in the Appendix. Together, these
lemmas provide an additional, important result:
Armstrong's axioms are a sound and complete set for PFDs over vague
tables. 

\begin{lemma}
\label{lemma:armstrongsoundcomplete}
Armstrong's axioms are a sound and complete set for PFDs over vague
tables. 
\end{lemma}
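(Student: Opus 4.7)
The plan is to obtain this lemma as a short corollary that stitches together several results already established earlier in the paper, rather than attempting any fresh combinatorial argument on vague tables. The key observation is that Lemma~\ref{lemma:armstrong.conservativity} already tells us that Armstrong's axioms are sound and complete for \emph{any} notion of FD that satisfies both Armstrong's axioms and conservativity; so all that remains is to verify that PFDs, when restricted to vague tables, enjoy both of these properties.

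First I would handle conservativity. By Lemma~\ref{lemma:strongweak} (implied by strong satisfaction) and Lemma~\ref{lemma:pfdweak} (implies weak satisfaction), PFDs over vague tables satisfy the two hypotheses of Lemma~\ref{lemma:onetwothree}, which then gives conservativity for free: on a standard table, $X\to_{\mathrm{PFD}}Y$ coincides with the classical FD $X\to Y$. Next, Lemma~\ref{lemma:armstrong} supplies that PFDs obey Armstrong's axioms over vague tables. Plugging both facts into Lemma~\ref{lemma:armstrong.conservativity} immediately yields that Armstrong's axioms form a sound and complete set for PFDs over vague tables.

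The only subtlety I would be careful about is scoping. Lemmas~\ref{lemma:strongweak}, \ref{lemma:pfdweak}, and \ref{lemma:armstrong} are all explicitly stated for vague tables, so when we apply Lemma~\ref{lemma:armstrong.conservativity} we are implicitly restricting the ambient class of instances to vague tables throughout, both in the statement of soundness (every FD derivable by Armstrong's rules from a satisfied set is also satisfied by any vague instance) and completeness (for any set $\mathcal{F}$ and FD $X\to Y$ satisfied by every vague instance satisfying $\mathcal{F}$, we can derive $X\to Y$ from $\mathcal{F}$ via Armstrong's axioms). Since completeness uses the standard two-tuple counterexample construction, and standard tables are themselves vague (by Lemma~\ref{lemma:type-relations}) and agree with PFDs by conservativity, the argument of Lemma~\ref{lemma:armstrong.conservativity} transports without change.

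Because every ingredient is already in place, I expect no real obstacle: the proof is effectively one sentence that cites Lemmas~\ref{lemma:strongweak}, \ref{lemma:pfdweak}, \ref{lemma:armstrong}, \ref{lemma:onetwothree}, and \ref{lemma:armstrong.conservativity} in turn.
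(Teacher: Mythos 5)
Your proposal is correct and matches the paper's own proof essentially verbatim: the paper likewise derives conservativity from Lemmas~\ref{lemma:strongweak} and~\ref{lemma:pfdweak} via Lemma~\ref{lemma:onetwothree}, then combines it with Lemma~\ref{lemma:armstrong} and Lemma~\ref{lemma:armstrong.conservativity} to conclude. Your extra remark about scoping the sound/complete definitions to vague instances is a reasonable clarification but does not change the argument.
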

\begin{proof}
Lemma~\ref{lemma:strongweak} and Lemma~\ref{lemma:pfdweak}
together imply, thanks to Lemma~\ref{lemma:onetwothree} that PFDs are
conservative. This, together with the previous Lemma and
Lemma~\ref{lemma:armstrong.conservativity}, establishes the result.
\end{proof}

\paragraph{Seamless  Satisfaction on Vague Tables} 
 Seamless satisfaction only holds on vague tables.

\begin{lemma} (Seamless satisfaction)
If each PFD in a set of PFDs is satisfied over vague table $T$, then
  the set is seamlessly satisfied over $T$.
\end{lemma}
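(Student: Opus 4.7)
The plan is to construct an explicit valuation $\mu \in \val(T)$ that simultaneously satisfies every PFD in the given set $\mathcal{F}$ as a standard FD. The idea is simple: pick, in each cell, the same value \emph{in a consistent way} so that agreement on $X$ after the valuation forces agreement on $X$ before the valuation (in the form of having a common choice in $t_1[B] \cap t_2[B]$ for each $B \in X$).

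First I would use Lemma~\ref{lemma:wecandecompose} to reduce to the case of PFDs of the form $X \to_{\mathrm{PFD}} A$ where $A$ is a single attribute disjoint from $X$; a standard FD $X \to Y$ is equivalent to $X \to Y - X$, and a set of such single-attribute PFDs is seamlessly satisfied iff the original set is. Then I would fix, for every attribute $A$, an arbitrary total order $\leq_A$ on $\dom(A)$ (invoking the well-ordering theorem if needed). Define the valuation $\mu$ by
\[
\mu(t)[A] \;=\; \min\nolimits_{\leq_A}\bigl(t[A]\bigr)
\]
for every tuple $t \in T$ and every attribute $A$. Because $t[A]$ is a finite nonempty subset of $\dom(A)$, this is well-defined, and $\mu(t)[A] \in t[A]$, so $\mu \in \val(T)$.

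It remains to verify that each $X \to_{\mathrm{PFD}} A$ in (the reduced form of) $\mathcal{F}$ is satisfied by $\mu$ as a standard FD. Suppose $\mu(t_1)[X] = \mu(t_2)[X]$. Then for every $B \in X$, the common value $v_B := \min_{\leq_B} t_1[B] = \min_{\leq_B} t_2[B]$ lies in $t_1[B] \cap t_2[B]$, so these intersections are nonempty. Since this holds for every $B \in X$, the standard tuple $\bar{a} = (v_B)_{B \in X}$ belongs to $t_1[X] \cap t_2[X]$ (using Lemma~\ref{lemma:attbyatt}, which lets us check equality attribute-by-attribute over vague tuples). By the PFD hypothesis (in the form given by Lemma~\ref{lemma:wecandecompose}), we conclude $t_1[A] = t_2[A]$ as vague values, hence $\mu(t_1)[A] = \min_{\leq_A} t_1[A] = \min_{\leq_A} t_2[A] = \mu(t_2)[A]$.

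The main subtlety is precisely the use of Lemma~\ref{lemma:attbyatt}: it is crucial that over vague tables the nonemptiness of componentwise intersections yields a witness $\bar{a} \in t_1[X] \cap t_2[X]$, which is exactly where the argument would fail for disjunctive tables (and explains why the analogous statement cannot hold there, by Theorem~\ref{theorem:baddisjunctivetable}). Once that point is in hand, the rest is a routine verification, and the ``choose the minimum'' trick is what turns the common-element guarantee from the PFD semantics into genuine equality after valuation.
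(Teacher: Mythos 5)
Your proof is correct, but it takes a genuinely different route from the paper's. The paper proves seamless satisfaction by an explicit propagation algorithm (Algorithm~\ref{algo:val}): for each attribute $A$ it picks a value in some multi-valued cell and then pushes that choice to every tuple reachable through potential agreement on the left-hand side of an FD determining $A$, asserting ``by inspection'' that the PFDs are preserved at each step. You instead exhibit a single canonical valuation --- take the $\leq_A$-minimum of every cell --- and verify it directly. The crux of your argument is the observation that, over vague tables, Lemma~\ref{lemma:wecandecompose} turns the PFD condition into ``componentwise intersection of the $X$-cells nonempty $\Rightarrow$ the $A$-cells are \emph{equal as sets},'' so any valuation whose choice in a cell is a function of the cell's content alone automatically preserves agreement; the minimum is just one convenient such function. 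Both the reduction to single-attribute right-hand sides and the step producing a witness $\bar{a}\in t_1[X]\cap t_2[X]$ from nonempty componentwise intersections (via the Cartesian-product structure of vague tuples) are sound. What your approach buys is brevity and transparency: it sidesteps the paper's propagation machinery and its unverified loop invariant, and it isolates exactly the structural feature of vague tables (cells factor as a product, so intersections can be checked attribute-by-attribute) whose absence makes the analogous claim fail for disjunctive tables per Theorem~\ref{theorem:baddisjunctivetable}. What the paper's algorithmic proof buys is a constructive procedure that tolerates arbitrary or randomized choices rather than a fixed canonical one, which is closer to something one would implement.
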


\begin{proof}
The proof follows through a construction, see
Algorithm~\ref{algo:val}. The algorithm terminates since it iterates
once over each attribute, and then it loops once over each tuple. The
algorithm only modifies the input table in the loop at
lines~\ref{line:beg}--\ref{line:end}. We can check by inspection that
if the PFDs held before this loop, then they are still satisfied
afterward. When the algorithm terminates, what remains is a
valuation. This concludes the proof. 
\end{proof}

\begin{example}
As a simple example of application of Algorithm~\ref{algo:val}, consider the
following vague table $R(A,B,C)$, $\{(a_1,\{b_1,b_2\},c1),
(a_1,\{b_1,b_2\},c2), (a_2,\{b_1,b_2\},c2)\}$, subjected to two PFDs:
$A\to_{\mathrm{PFD}}B$ and  
$C\to_{\mathrm{PFD}}B$. The algorithm visits all attributes in sequence.
It may first consider the attribute ($A$). It would then seek all
FDs that determine $A$: in this instance, there are none. It would then
visit each tuple $t$ in sequence and check if $t[A]$ contains more
than one value. In this instance, it would find none. The algorithm
could then check attribute $C$ and do nothing again. It may then
move to  attribute ($B$). It would seek all FDs that determine $B$.
In this instance, there are two: $A\to_{\mathrm{PFD}}B$ and 
$C\to_{\mathrm{PFD}}B$.
 Consequently, it would set $\mathcal{X} = \{\{A\},\{C\}\}$ 
 to keep track of the determining sets of attributes (the LHS of
 FDs). Then it visits the 
tuples $t$ in sequence, looking for a tuple such that $t[B]$ contains more
than one value. It would encounter $t=(a_1,\{b_1,b_2\},c1)$. 
It would then pick any one value for attribute $B$, e.g., $b_1$ (the choice
can be randomized). It would then store $t$ in the initially empty
set $s$. It would then visit all tuples $t'$ in $R$, and check if there is
a possible world where $t'[X]$ might agree with $t''[X]$ for some
$t''\in s$ and some $X\in \mathcal{X}$; when that is true, it would
add the tuple to $s$. In this instance, it would add both remaining
tuples in $s$, so that all tuples from $R$ end up in $s$. It would
then modify all tuples $t'\in s$ so that $t'[B]=b_1$. 
The algorithm would terminate with the standard table 
$(a_1,b_1,c1), (a_1,b_1,c2), 
(a_2,b_1,c2)$.
\end{example}

\begin{algorithm}
\caption{Valuation algorithm for vague
tables.\label{algo:val}}\label{alg:valuation} 
\begin{algorithmic}[1]
\STATE \textbf{input}: A set of PFDs $\mathcal{F}$, of the form $X\to
Y$. Assume without loss of generality that $Y\cap X = \emptyset $ and 
$Y$ is  a singleton (see
Lemma~\ref{lemma:wecandecompose}). 
\STATE \textbf{input}: A vague table $R$ satisfying the  PFDs
\STATE \textbf{output}: A valuation of $R$ that satisfies all PFDs in
$\mathcal{F}$ 
\FOR{each attribute $A$ in the schema of $R$}
\STATE Given $A$, identify all the FDs in  $\mathcal{F}$ of the form
$X \to A$ for some set of attributes $X$. Call $\mathcal{X}$ the set
of all sets of attributes $X$. 
\FOR{each tuple $t$ in $R$}
\IF{$t[A]$ contains more than one attribute value}
\STATE Select any one attribute value $a$ in $t[A]$.
\STATE Let $s$ be a set containing initially $t$
\FOR{each tuple $t'$ in $R$}
\IF{there is a possible world where $t'[X]$ is equal to  $t''[X]$ for
some $X$ in $\mathcal{X}$ and $t''$ in $s$} 
\STATE add $t'$ to $s$
\ENDIF
\ENDFOR
\FOR{each tuple $t'$ in $s$}\label{line:beg}
\STATE set $t'[A]$ to the singleton containing $a$
\ENDFOR\label{line:end}
\ENDIF
\ENDFOR 
\ENDFOR
\RETURN $R$
\end{algorithmic}
\end{algorithm}

\section{Efficiency Issues}
\label{section:efficiency}
In addition to the above constraints on the semantics of
FDs, there is another aspect that has considerable practical impact:
computational efficiency.
Checking that an FD holds should be reasonably inexpensive
 computationally (\emph{computationally practical}). At a minimum, the
 problem should not be  NP-hard; ideally, it should be practical
 to check the FD over  large datasets. This means that the designer
 knows it is possible to enforce the FDs in a realistic database. For our
 purposes, we identify ``practical'' with sub-quadratic running times,
 e.g., linearithmic ($\BigO{n \log n}$ where $n$ is the number of tuples).  
It turns out that there are some difficult problems in regard to this
issue, as we show next.

\subsection{Checking Weak Seamless Satisfaction in Vague Tables is
NP-complete} 
\label{subsection:satisfaction}

We have seen that weak FDs fail to satisfy  Armstrong's axioms
(indeed, weak FDs are not transitive).  A closely related problem is
that weak FDs do not enforce seamless 
 satisfaction. However, we could separately require that sets of FDs
 must be jointly satisfied, and then transitivity would no longer be
 an issue.  Unfortunately, this might be computationally unfeasible,
 as the next  theorem shows.

\begin{proposition}\label{proposition:npcomplete}
Consider any set of weak FDs ${\cal F}$ over some vague table.
Determining whether such a set ${\cal F}$ holds seamlessly is
NP-complete. 
\end{proposition}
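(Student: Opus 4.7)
The plan is in two standard parts.

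For membership in NP, a valuation of a vague table picks one element of each vague cell's finite set of possibilities, so a valuation has polynomial description length in the input. A nondeterministic machine guesses such a valuation and verifies in polynomial time that every FD in ${\cal F}$ holds in the standard sense on the resulting standard table; hence the problem lies in NP.

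For NP-hardness, I would reduce from 3-SAT. Given a 3-CNF formula $\phi$ with variables $x_1,\ldots,x_n$ and clauses $C_1,\ldots,C_m$, I would construct in polynomial time a vague table $T$ and a set ${\cal F}$ of FDs such that ${\cal F}$ is seamlessly satisfied on $T$ if and only if $\phi$ is satisfiable. The construction uses a \emph{variable gadget} per $x_i$---a tuple whose vague entry encodes the intended truth value in $\{T,F\}$---and a \emph{clause gadget} per $C_j$---a tuple carrying a vague ``witness-literal'' cell that ranges over the three literals of $C_j$, together with FDs that propagate the witness choice into the truth-value attribute of the corresponding variable gadget. A seamless valuation then picks, for each clause, a witness literal whose propagated truth value is consistent with a single global assignment to the variables; such a joint choice exists precisely when $\phi$ is satisfiable.

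The main obstacle lies in designing the clause gadget, since FDs enforce equality rather than disjunction. Making the witness-literal attribute vague lets any one of the three choices suffice to satisfy the clause, while the accompanying FDs force that choice to agree with the variable gadget's assignment. Correctness requires careful indexing---parameterizing the witness identifiers by both clause and variable---so that distinct clause gadgets do not couple through shared attribute values and create spurious conflicts. Once the gadget is specified this way, the equivalence between seamless valuations of $T$ and satisfying assignments of $\phi$ reduces to a routine case check on the propagation of the witness choice.
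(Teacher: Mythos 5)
Your NP-membership argument is fine (guess a valuation, which has polynomial size since each vague cell is a finite set, and verify all FDs on the resulting standard table); the paper leaves this part implicit and devotes its proof to hardness. The problem is with your hardness half. The paper reduces from 3-Dimensional Matching, building one vague column of triple-identifiers and the three FDs $X\to T$, $Y\to T$, $Z\to T$; you propose instead to reduce from 3-SAT. That could in principle work (the problem is NP-complete, after all), but you have not actually exhibited the reduction: you name the crux yourself (``the main obstacle lies in designing the clause gadget, since FDs enforce equality rather than disjunction'') and then declare it routine without constructing the gadget. As written, this is a genuine gap, not a proof.

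Moreover, the gadget architecture you sketch runs into a concrete obstacle in the vague-table model. A vague tuple makes its choices \emph{independently in each attribute}: there is no way to correlate the value selected in the ``witness-literal'' cell with the value selected in a ``truth-value'' cell of the same tuple (this is exactly why vague tables are strictly less expressive than disjunctive tables, cf.\ Lemma~\ref{lemma:type-relations}). For a clause with literals of mixed sign, say $x_1 \vee \neg x_2 \vee x_3$, choosing witness $\neg x_2$ must force $x_2$ to \emph{false} while choosing witness $x_1$ must force $x_1$ to \emph{true}; an FD of the form $\mathrm{Witness}\to\mathrm{Val}$ can only force the clause tuple's $\mathrm{Val}$ cell to agree with something, and since that cell's valuation is chosen independently of the witness cell, nothing ties the propagated truth value to the sign of the chosen literal. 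Your ``careful indexing by clause and variable'' decouples distinct clause gadgets from one another, but it does not create the needed \emph{within-tuple} coupling. Any working 3-SAT reduction must route this correlation through additional tuples and FDs (or restrict to a monotone/NAE-style variant of SAT); until you specify that mechanism and verify both directions of the equivalence, the reduction does not stand. The paper's 3DM reduction sidesteps the issue because each gadget row carries only one semantically meaningful vague cell (the tuple-identifier column), so no within-row correlation is ever required.
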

\begin{proof}
 We prove that the problem is NP-complete by reducing the 
3-Dimensional Matching (3DM) problem to it. The 3DM
problem can be 
described as follows.  We have disjoint sets $X$, $Y$, and $Z$, each
having cardinality $n$. Moreover, we have a set of triples $T$ in $X
\times Y \times Z$. We want to determine whether there is a subset of
$T$  of triples such that each element of $X \cup Y \cup Z$ is
included in exactly one tuple.
 This problem is an NP-complete problem  (\cite{Karp:1990:OAO:100216.100262}). 

Start with any 3DM instance. We want to reduce it to our problem.
Given table $T$, we create a table $T'$ with 4~attributes, one for
each of $X$, $Y$, $Z$, and then a fourth one which contains a tuple identifier
for each tuple in $T$. We abuse the notation and call these attributes
$X, Y, Z, T$. We then populate $T'$ from $T$ as follows:
given a value $x$ of $X$ (resp.\ $Y$ and $Z$) insert the tuple $(x,
{\cal Y}, {\cal Z}, t)$ where ${\cal Y}$ (resp.\ ${\cal Z}$) is the set of all
values in $Y$ (resp.\ $Z$) and $t$ is the list of tuple identifiers for tuples
containing $x$ in $T$. We end up with a table with $3n$~tuples. 
Given the set of FDs $\{\fd{X}{T}, \fd{Y}{T}, \fd{Z}{T}\}$,
determining whether there is a valuation of this table that respects
the set is equivalent to the 3DM problem.   Indeed, the valuation must
map each value $x \in X$ (resp.\ $y \in Y$, $x 
\in Z$) to one, and only one, tuple identifier because of the FD
$\fd{X}{T}$ (resp.\ $\fd{Y}{T}, \fd{Z}{T}$). By construction, each
tuple identifier corresponds to one and only one value of attribute
$X$ (resp.\ $Y$, $Z$), so the relationship between the $n$~values of
$X$  (resp.\ $Y$, $Z$) and the $n$~tuple identifiers is bijective. 
Hence, such a valuation must contain $n$~distinct tuples, and the set
of these distinct tuples corresponds to a subset of the original table
$T$. 

This concludes the proof.
\end{proof}

To illustrate the proof, let us consider an actual example.
We have that $X=\{a,b,c\}$, $Y=\{1,2,3\}$ and $Z=\{A,B,C\}$.
Let $T$ be the table in Fig.~\ref{fig:npcomplete1}.
First, we add tuple identifiers: see Fig.~\ref{fig:npcomplete2}.
And then we transform the result into vague table $T'$ with the procedure
given in the proof (see Fig.~\ref{fig:npcomplete3}). Finding a
standard table where all the FDs $X \to T$, $Y \to T$, $Z \to T$ hold
(as in Fig.~\ref{fig:npcomplete4}) is equivalent to finding a 
solution to the 3DM problem on the original instance.

\begin{figure}\centering
\subfloat[Set of triples $T$ in $X
\times Y \times Z$\label{fig:npcomplete1} ]{%
\begin{tabular}{ccc}
X & Y  &Z \\ \hline
a & 2 & B \\ 
b & 1 & A \\ 
c & 3 & C \\ 
a & 1 & B\\ 
b & 3 & B\\ 
\end{tabular}
} 
\subfloat[Table $T$ with tuples identifiers\label{fig:npcomplete2}]{%
\begin{tabular}{cccc}
X & Y  &Z & T \\ \hline
a & 2 & B& $t_1$ \\ 
b & 1 & A & $t_2$\\ 
c & 3 & C & $t_3$\\ 
a & 1 & B & $t_4$\\ 
b & 3 & B & $t_5$\\ 
\end{tabular}
} 
\subfloat[Vague table $T'$\label{fig:npcomplete3}]{%
\begin{tabular}{cccc}
X & Y  &Z & T\\ \hline
 a & ${\cal Y}$ & ${\cal Z}$ & $\{t_1,t_4\}$\\ 
 b & ${\cal Y}$& ${\cal Z}$ & $\{t_2,t_5\}$\\ 
 c & ${\cal Y}$& ${\cal Z}$ & $t_3$\\
${\cal X}$ & 1& ${\cal Z}$ & $\{t_4,t_2\}$\\ 
${\cal X}$ & 2& ${\cal Z}$ & $t_1$\\ 
${\cal X}$ & 3& ${\cal Z}$ & $\{t_3,t_5\}$\\ 
${\cal X}$ & ${\cal Y}$& A & $t_2$\\ 
${\cal X}$ & ${\cal Y}$& B & $\{t_4,t_1,t_5\}$\\ 
${\cal X}$ & ${\cal Y}$& C & $t_3$\\ 
\end{tabular}
}
\subfloat[Valuation of $T'$ satisfying $X\to T$, $Y\to T$, and $Z\to T$ \label{fig:npcomplete4}]{%
\begin{tabular}{cccc}
X & Y  &Z & T\\ \hline
 a & 2& B & $t_1$\\ 
b & 1& A &  $t_2$\\
c& 3 & C& $t_3$\\
b & 1& A &  $t_2$\\
 a & 2& B & $t_1$\\ 
c& 3 & C& $t_3$\\
b & 1& A &  $t_2$\\
 a & 2& B & $t_1$\\ 
c& 3 & C& $t_3$\\
\end{tabular}
}
\caption{Illustration of the proof of Proposition~\ref{proposition:npcomplete}\label{fig:npcomplete}}
\end{figure}

Since vague tables are a subset of disjunctive databases, we
immediately have the following corollary.

\begin{corollary}
Seamless satisfaction of standard FDs in disjunctive databases is
NP-complete. 
\end{corollary}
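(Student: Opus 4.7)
The plan is to leverage Proposition~\ref{proposition:npcomplete} directly together with the hierarchy result of Lemma~\ref{lemma:type-relations}. The corollary asserts two things: that the seamless satisfaction problem for standard FDs over disjunctive tables lies in NP, and that it is NP-hard. I would handle these separately, with most of the work already done by the reduction in Proposition~\ref{proposition:npcomplete}.

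First, for NP-membership, I would observe that a valuation of a disjunctive table is specified by a choice of one disjunct per tuple, which has polynomial size in the input. Given such a guess, the resulting standard table can be constructed in polynomial time, and each standard FD $X\to Y$ can be verified on it in polynomial time by comparing pairs of tuples. Hence seamless satisfaction of a set of standard FDs over a disjunctive table is decidable in nondeterministic polynomial time.

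For NP-hardness, I would reuse the 3DM reduction from the proof of Proposition~\ref{proposition:npcomplete}. That construction produces a vague table $T'$ and a set of FDs such that $T'$ admits a seamless valuation if and only if the 3DM instance has a matching. By the first part of Lemma~\ref{lemma:type-relations}, every vague table has an equivalent disjunctive table $T''$, equivalent in the sense of having the same set of possible worlds. Since seamless satisfaction of a set of FDs depends only on the set of possible worlds (a set of FDs is seamlessly satisfied on a table iff some possible world of the table satisfies all FDs in the standard sense), $T'$ and $T''$ agree on seamless satisfaction. Moreover, the translation from vague to disjunctive can be carried out in polynomial time for the particular tables produced by the reduction, since each tuple in $T'$ has only polynomially many valuations (the set-valued entries are drawn from sets of size $n$). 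Therefore the 3DM reduction composes with the vague-to-disjunctive translation to give a polynomial-time reduction from 3DM to seamless satisfaction of standard FDs over disjunctive tables.

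The main obstacle I anticipate is the polynomial-time translation from vague to disjunctive in the preceding paragraph: Lemma~\ref{lemma:type-relations} only asserts existence, not a polynomial bound on the size of the disjunctive representation. For arbitrary vague tables the number of disjuncts can blow up exponentially in the number of set-valued attributes per tuple. However, the specific vague tables generated by the 3DM reduction have a bounded structure (each tuple has at most three set-valued attributes with sets of size $n$), so the blowup is polynomial, and the reduction remains polynomial-time. I would make this size bound explicit to complete the argument.
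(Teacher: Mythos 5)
Your proof is correct and follows the same route as the paper, which justifies the corollary in one line by observing that vague tables are a special case of disjunctive tables and letting the hardness of Proposition~\ref{proposition:npcomplete} transfer. You are in fact more careful than the paper: your observation that the vague-to-disjunctive translation is only polynomial because the reduction's tuples have at most three set-valued attributes (giving $O(n^2\lvert T\rvert)$ disjuncts per tuple) addresses a genuine gap in the paper's ``immediately'' and is worth stating explicitly.
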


Hence,  weak FDs are probably not appropriate if we desire seamless
satisfaction. However, if we are willing to restrict the set of FDs to
those with
``monodependence''~\cite{Levene:1999:DDI:310701.310712,levene1997additivity},
then seamless satisfaction is ensured. But as we explain in
Section~\ref{section:relatedwork}, monodependency is quite
restrictive. 

Even if we were willing to solve potentially difficult
computational problems to enforce seamless satisfaction with weak FDs,
an approach that enforces seamless satisfaction by processing all FDs
at once would fail to meet our \emph{independence from irrelevant
attributes} property.   

\subsection{PFDs Are Computationally Practical}
The above leaves an important question open: is the new concept of PFD
computationally practical, at least in some model?
We can show that deciding whether PFDs hold in a vague table can
be determined efficiently. This is achieved by indexing the 
table based on the attributes on the left-hand-side of the PFD. 

Recall that a PFD $X\to_{\mathrm{PFD}} Y$ holds if and only if 
 for any two tuples $t,t'$, and any possible values $\bar{a}$ of
$X$, we have that $t[X=\bar{a}][Y]= t'[X=\bar{a}][Y]$. Thus if we
 build a map from the possible values $\bar{a}$ of $t[X]$ for all
 tuples $t$, and the corresponding values $t[X=\bar{a}][Y]$, then we
 can readily check if any new tuple satisfy the PFD\@.  

Let us formalize the process. Given an initially empty vague (or
disjunctive) table $R$, let $M_{X,Y}(R)$ be an initially empty map
(e.g., a hash table) from  standard tuples over $X$ to pairs of
values: one  vague (or disjunctive) tuples over $Y$ and one
non-negative integer counter. It suffices to be able to check that
insertions and removals do not violate the PFD---a tuple update can
always be implemented as a removal followed by an insertion.  
\begin{itemize}
\item When a new tuple $t$ is inserted in $R$, we check if $\bar {a}$
is present in $M_{X,Y}(R)$ for each $\bar {a}\in t[X]$. If it is not,
then we store the mapping $\bar {a}\to (t[Y],1)$ in $M_{X,Y}(R)$ where
the number 1 indicates that exactly one tuple supports the
relation. If  $\bar {a}$ is present  $M_{X,Y}(R)$, we check that the
stored value agrees with $t[Y]$, if it does not, then  $R$ does not
satisfy the PFD (and so we must reject tuple $t$), if it does, simply
increment the counter by one.  
\item When a tuple $t$ is removed from $R$, for each $\bar {a}\in
t[X]$, we  get the corresponding mapping in $M_{X,Y}(R)$ and decrement
the counter. If the counter falls to zero, we remove the corresponding 
mapping from $M_{X,Y}(R)$. 
\end{itemize}
If we use a hash table as the underlying data structure for
$M_{X,Y}(R)$, then the insertions and removals  require only  expected
linear time with respect to the number of distinct valuations of
$t[X]$. 

Thus we have shown that PFDs can be enforced efficiently on 
 vague tables. Further physical designs issues are outside our scope
 and left as future  work.

\section{Comparison with Prior Work}

A notion of seamless satisfaction was introduced
by Levene and
Loizou~\cite{Levene:1999:DDI:310701.310712,levene1997additivity}
without a name (it was called ``satisfaction of a set of 
FDs'') for tables with \nullname{} markers. Levene and Loizou~\cite{levene1997additivity} also
introduced the related notion 
of \emph{additivity}. A set $F$ has
additivity if satisfaction of each FD guarantees satisfaction of the
\emph{reduced cover} of $F$ as a set ($G$ is a cover of $F$ if and only if the
closure of $G$ is the same as the closure of $F$; $G$ is reduced if
each FD is in a minimal form). Levene and Loizou prove that, in the
context of tables with \nullname{} markers, a set of FDs $F$ is
additive if and only if it is \emph{monodependent}. Monodependency
means that, in the closure of $F$, each attribute is determined only
by one FD, and
there are no non-trivial cycles in the set (a cycle $XB \to A$ and
$YA\to B$ is trivial if either $Y\to B$ or $(X\cap Y)A \to
B$). We can  check  in polynomial time whether a set of FDs
is monodependent by constructing a canonical cover of the
set~\cite{levene1997additivity}.  However, monodependency is defined on the
\emph{closure} of the set of FDs; as a consequence, sets like $\{A\to
B,\ B\to C\}$ are not  monodependent. Thus, while quite powerful, the
notion of monodependence is also restrictive.  In contrast 
to this approach, we work with the set of FDs as given, without
relying on a notion of closure. We also use a more general
framework, 
focusing on vague and disjunctive tables. In this sense, we are
following a suggestion of Levene and
Loizou~\cite[p. 13]{levene1997additivity}, who write: ``It would be 
an interesting research topic to extend the results presented herein
to  or-sets, i.e.\ allowing, instead of any  occurrence of unk, a finite
set of possible values, one of which is the true value.''
We have shown that checking
seamless satisfaction in vague tables (hence, in disjunctive tables) is
NP-complete. Also, we have shown that a set of FDs cannot be seamlessly
satisfied over disjunctive tables if such set must also be subject to
the properties  \emph{implied by 
strong satisfaction} and \emph{independence from irrelevant attributes}.
Thus, our work can be seen as an initial step
in developing Levene and Loizou's research suggestion. 
 
In other work on FDs with \nullname{} markers~\cite{Badia15052014},
two interpretations are proposed: literal and super-reflexive FDs. In
this work, it is required that FDs be 
realizable, which is equivalent to \emph{implies weak satisfaction}, and
to be strongly realizable, which is equivalent to \emph{seamless
 satisfaction}. It is also required that \emph{Armstrong
 satisfaction} holds and that checking FDs be \emph{computationally practical}.
 In a manner that is reminiscent (but distinct) from 
 Levene and Loizou~\cite{Levene:1999:DDI:310701.310712,levene1997additivity},
 sets of super-reflexive FDs must be cycle-free and  have at most
 one FD determining any one attribute, to support \emph{seamless
 satisfaction}. However, unlike Levene and Loizou's monodependency
 result, the conditions applied on the set of FDs as given, and not to 
 their closure:  a chain of super-reflexive FDs such as
 $A\to B$ and $B\to C$ supports  \emph{seamless
 satisfaction}.

Another area of research focuses on different models to represent
 incomplete information. There is a trade-off
 between the  expressivity and the complexity of such models: simple
 models are  intuitively easier to understand, 
 but limited in what they can express, while complex models are 
 expressive but can be  difficult to reason with. Since we are mostly
 interested in practical design applications, we have 
 focused on two models which are at the bottom and middle of the hierarchy
 of Das Sarma et al.~\cite{sarmauncertain:2009}: our vague tables are
 called {attribute-or} tables there, and are the simplest and
 less expressive  model studied, while disjunctive tables are in the
 middle, more  powerful than vague tables but less powerful than
 models like  c-tables~\cite{ImielinskiLipski}. C-tables allow
 variables in tuples 
 and arbitrary formulas over such variables, so as to express
 constraints on the values that such variables can take. 

 In a different context (probabilistic XML), Amarilli found that
determining whether a document is a possible world given a
probabilistic document is
NP-hard~\cite{amarillipossibility}. Similarly, we determined 
NP-hardness for checking seamless satisfaction.

\section{Conclusion and Further Research}
\label{section:conclusion}

We have proposed a set of basic properties that we consider desirable
for any definition of FDs to fulfill from a conceptual design point of
view. Our properties are independent of the underlying model of
incomplete information  (see Section~\ref{section:desirable}). 

We have argued that such properties would make the conceptual design
of the database easier in many  real-world scenarios. To our
knowledge, no prior work has considered a comparable set of
desiderata for the semantics of FDs over incomplete information.  

We have then examined several proposals in the literature, including
weak and strong satisfaction, the vertical FDs of Das Sarma
et al.~\cite{DBLP:conf/amw/SarmaUW09} and the fuzzy dependencies
of Raju and Majumdar~\cite{Raju:1988:FFD:42338.42344}, 
and shown that they do not satisfy one or more of the desired
properties. In particular, we have proven that checking whether a set
of FDs is seamlessly satisfied under weak satisfaction is NP-complete
(see Proposition~\ref{proposition:npcomplete}), 
thus not meeting the computational efficiency property.
 Finally, we introduced {P-Functional FDs (PFDs)} as a simple
 notion of FD, and we have proven that PFDs have all the desired
 properties on vague tables. Therefore we have shown that while no
 notion of FD can satisfy all our desirable properties over
 disjunctive tables (Theorem~\ref{theorem:all}), it is
 possible to satisfy all of them over vague tables. 
Moreover, we have shown that Armstrong's axioms form a complete
and sound axiomatization for PFDs over vague tables.

While we have argued intuitively for the desirability of the proposed
properties, other sets of properties could be investigated. From a
different perspective, it would be interesting to determine maximal
and practical useful sets of desirable properties that \emph{can} be
satisfied in powerful models of incomplete information, like
disjunctive databases and c-tables. 

\section*{Acknowledgments} 

This work was supported by the Natural Sciences and Engineering
Research Council of Canada [26143].

\bibliographystyle{fundam}

\bibliography{new.vague}

\section*{Appendix}

\paragraph{Proof of Lemma~\ref{lemma:wecandecompose}} Over vague tables,
$X\to_{\mathrm{PFD}} Y$ holds if and only if $X\to_{\mathrm{PFD}} A$
for all attributes $A\in Y-X$. 
\begin{proof}
By definition, if the $X\to_{\mathrm{PFD}} Y$ holds, then 
 $t_1[X=\bar{a}][Y] = t_2[X=\bar{a}][Y]$ for all
$\bar{a} \in t_1[X] \intersection t_2[X]$, and therefore $t_1[X=\bar{a}][A]
 = t_2[X=\bar{a}][A]$ for all  $A\in Y-X$. Let us consider the reverse
 implication. Suppose that $X\to_{\mathrm{PFD}} A$ for all attributes
 $A\in Y-X$. We have that $t_1[X=\bar{a}][A] = t_2[X=\bar{a}][A]$ for
 all $\bar{a} \in t_1[X] \intersection t_2[X]$ not only for all $A\in
 Y-X$, but clearly for all $A\in Y$. Indeed, if $A\in
 X \intersection Y$, then both $t_1[X=\bar{a}][A]$ and
 $t_1[X=\bar{a}][A]$  must agree with $\bar{a}$ projected on $A$. By
 Lemma~\ref{lemma:attbyatt}, the rest of the result follows. 
\end{proof}

\paragraph{Proof of Lemma~\ref{lemma:strongweak}} $X \to_S Y \Rightarrow
X \to_{\mathrm{PFD}} Y $. 
\begin{proof}
Assume $X \to_S Y$ but
  $X \not \to_{\mathrm{PFD}} Y$. Then there are tuples $t_1$, $t_2$ with
  $t_1[X= \bar{a}][Y] \neq t_2[X= \bar{a}][Y]$ for some $\bar{a} \in
  t_1[X= \bar{a}] \intersection t_2[X= \bar{a}]$. Assume without loss of
  generality that there is 
  $\bar{b} \in t_1[X= \bar{a}][Y]$ and $\bar{b} \notin t_2[X=
    \bar{a}][Y]$. Then there is a possible world which assigns, to
  $t_1$, $\bar{a}\bar{b}$ for $t_1[XY]$ and $\bar{a}\bar{c}$ for
  $t_2[XY]$, $\bar{c} \neq \bar{b}$ (since $\bar{b} \notin t_2[X=
    \bar{a}][Y]$). In this world, $X \to Y$ does not hold (in standard
  form); hence $X \to_S Y$ fails. Contradiction.
\end{proof}

\paragraph{Proof of Lemma~\ref{lemma:pfdweak}} $X \to_{\mathrm{PFD}}
Y \Rightarrow X \to_W Y$. 
\begin{proof}
Assume $X \to_{\mathrm{PFD}} Y$,
  and $X \not \to_W Y$. Then, there is no possible world where $X \to
  Y$ holds in a standard manner. This can only happen if there are tuples
  $t_1$, $t_2$ with $t_1[X] \intersection t_2[X] \neq \emptyset$ and
  $t_1[Y] \intersection t_2[Y] = \emptyset$. But then there is some $\bar{a}
  \in t_1[X] \intersection t_2[X]$, and for that $\bar{a}$, we have that
  $t_1[X=\bar{a}][Y] \neq t_2[X=\bar{a}][Y]$, hence contradicting the
  assumption that $X \to_{\mathrm{PFD}} Y$. 
\end{proof}

\paragraph{Proof of Lemma~\ref{lemma:armstrong}} PFDs obey Armstrong's axioms
over vague tables. 
\begin{proof}
(Reflexivity) To prove that $X \to_{\mathrm{PFD}} X$, we unravel the definition:
  this means that for any two tuples $t_1$, $t_2$, $t_1[X=\bar{a}][X]
  = t_2[X=\bar{a}][X]$. But for any $t$,  $t[X=\bar{a}][X] = \{t[X]
  \mid t[X=\bar{a}]\} = \{t[X] \mid t[X] = \bar{a}\} = \{\bar{a}\}$. 

(Augmentation) Assume $X \to_{\mathrm{PFD}} Y$. Consider
any two tuples  $t_1$ and $t_2$ that agree on $XZ$, that is, there are 
values  $\bar{a}$ for $X$ and $\bar{b}$ for $Z$ such that
$\bar{a}\bar{b} \in t_1[XZ] \intersection t_2[XZ]$. Then we have that
$\bar{a} \in t_1[X] \intersection t_2[X]$. By the assumption that
$X \to_{\mathrm{PFD}} Y$, this implies that
$t_1[X=\bar{a}][Y]=t_2[X=\bar{a}][Y]$. Therefore, we have
$t_1[XZ=\bar{a}\bar{b}][Y]=t_2[XZ=\bar{a}\bar{b}][Y]$. 
Because the projection of $t_1[XZ=\bar{a}\bar{b}]$ and
$t_2[XZ=\bar{a}\bar{b}]$ on $Z$ must agree with $\bar{b}$, we finally
have $t_1[XZ=\bar{a}\bar{b}][YZ]=t_2[XZ=\bar{a}\bar{b}][YZ]$
establishing that $XZ \to_{\mathrm{PFD}} YZ$. 

(Transitivity) Suppose that $X \to_{\mathrm{PFD}} Y$ and
  $Y \to_{\mathrm{PFD}} Z$ hold over some vague table. If 
  two tuples $t_1, t_2$ agree on $X$ with value $\bar{x}$, then we
  want to show that   $t_1[X=\bar{x}][Z]$ is identical to
  $t_2[X=\bar{x}][Z]$. 
 By $X\to_{\mathrm{PFD}} Y$, we have that $t_1[X=\bar{x}][Y]$ is identical to
 $t_2[X=\bar{x}][Y]$. Both of these are equal to the same set of tuples
 over $Y$. Consider that $Y\to_{\mathrm{PFD}} Z$, and pick a tuple $\bar{y}$ in
 $t_1[X=\bar{x}][Y]=t_2[X=\bar{x}][Y]$. We have that  $t_1[Y=\bar{y}][Z]$ is
 identical as a set to  $t_2[Y=\bar{y}][Z]$. We have that
 $t_1[X=\bar{x}][Z]$ and $t_2[X=\bar{x}][Z]$ are the unions of
 $t_1[Y=\bar{y}][Z]$ and $t_2[Y=\bar{y}][Z]$ respectively over the tuples
 $\bar{y}$ in $t_1[X=\bar{x}][Y]=t_2[X=\bar{x}][Y]$; they are therefore
 identical. 
\end{proof}

\end{document}